\documentclass[
    reprint,
    prx,
    amsmath,
    amssymb,
    aps,
    superscriptaddress,
    longbibliography,
    floatfix
]{revtex4-2}

\usepackage{graphicx}
\usepackage{dcolumn}
\usepackage{bm}
\usepackage{placeins}

\usepackage{etoolbox}
\usepackage{comment}
\usepackage{multirow}
\usepackage{textcomp}
\usepackage{ragged2e}

\usepackage{quantikz}
\usepackage{amsthm}
\usepackage{braket}

\usepackage{caption}
\usepackage{subcaption}
\usepackage{xcolor}
\usepackage[colorlinks=true,linkcolor=blue,citecolor=blue,urlcolor=blue]{hyperref}

\newcommand{\abs}[1]{\bigl|#1\bigr|}
\newcommand{\norm}[1]{\left\lVert#1\right\rVert}

\newtheorem{theorem}{Theorem}
\newtheorem{proposition}{Proposition}

\newtheorem{definition}{Definition}

\newtheorem{remark}{Remark}
\newtheorem*{restatement}{Restatement}

\DeclareMathOperator{\Tr}{Tr}

\begin{document}

\title{A rigorous hybridization of variational quantum eigensolver with classical neural network}

\author{Minwoo James Kim}
 \affiliation{Dept. of Computer Science and Engineering, Seoul National University, Seoul 08826, South Korea}
 \affiliation{NextQuantum, Seoul National University, Seoul 08826, South Korea}
\affiliation{Automation and Systems Research Institute, Seoul National University, Seoul 08826, South Korea}

\author{Kyoung Keun Park}%
\affiliation{Dept. of Computer Science and Engineering, Seoul National University, Seoul 08826, South Korea}
\affiliation{NextQuantum, Seoul National University, Seoul 08826, South Korea}
\affiliation{Automation and Systems Research Institute, Seoul National University, Seoul 08826, South Korea}

\author{Kyungmin Lee}
 \affiliation{Dept. of Computer Science and Engineering, Seoul National University, Seoul 08826, South Korea}
 \affiliation{NextQuantum, Seoul National University, Seoul 08826, South Korea}
\affiliation{Automation and Systems Research Institute, Seoul National University, Seoul 08826, South Korea}

\author{Jeongho Bang}
\email{jbang@yonsei.ac.kr}
\affiliation{Institute for Convergence Research and Education in Advanced Technology, Yonsei University, Seoul 03722, Republic of Korea}
\affiliation{Department of Quantum Information, Yonsei University, Incheon 21983, South Korea}

\author{Taehyun Kim}
\email{taehyun@snu.ac.kr}
\affiliation{Dept. of Computer Science and Engineering, Seoul National University, Seoul 08826, South Korea}
\affiliation{NextQuantum, Seoul National University, Seoul 08826, South Korea}
\affiliation{Automation and Systems Research Institute, Seoul National University, Seoul 08826, South Korea}
\affiliation{Institute of Computer Technology, Seoul National University, Seoul 08826, South Korea}
\affiliation{Institute of Applied Physics, Seoul National University, Seoul 08826, South Korea}

\date{\today}

\begin{abstract}

Combining variational quantum process with classical neural learning offers a flexible route to improve ground-state estimation. We establish an integrated framework that connects neural transformations, measurement statistics, and guarantees physical stability through three requirements: self-contained training, polynomial resource scaling, and variational consistency. Its constructive realization, termed \emph{unitary variational quantum-neural hybrid eigensolver}~(U-VQNHE), couples a variational quantum circuit to a neural phase function through norm-preserving post-processing. The learned transformation is evaluated from measurement records, preserves the exact variational bound, and admits range-independent concentration guarantees for independent finite-shot evaluations. Complementing this construction, we characterize the statistical and representational constraints of amplitude reweighting: sampled-support mismatch can destabilize empirical normalization, while exact distribution matching can require exponentially large dynamic range for Haar-random targets and structured ansatz--target pairs under specified near-tensorizability and mismatch conditions. Finite-shot simulations on Ising and disordered XYZ spin models demonstrate improved energy accuracy over the underlying variational quantum eigensolver and greater robustness than the amplitude-reweighting baselines. Together, these results provide a principled foundation for quantum--neural eigensolvers in which physical consistency, expressive capacity, and measurement cost are treated as a single design problem.

\end{abstract}

\maketitle

\section{Introduction}

Variational quantum eigensolver~(VQE)~\cite{Peruzzo2014, Tilly2022} is a paradigmatic near-term algorithm for estimating ground-state energies in chemistry and condensed-matter physics~\cite{McArdle2020, Lim2024, AlBalushi2024}. Its appeal is not only practical---a shallow parametrized quantum circuit (PQC) executed on noisy intermediate-scale quantum (NISQ) hardware~\cite{Preskill2018}---but also physical. Namely, for any normalized trial state, the Rayleigh--Ritz variational principle guarantees that the exact energy expectation value is an upper bound on the true ground-state energy~\cite{Macdonald1933}. This bound anchors optimization to a physical state, while finite-shot energy estimates require statistical error bars. Preserving both this variational interpretation and control of measurement uncertainty is essential when enlarging the variational architecture.

As variational quantum algorithms (VQAs) matured into a broad framework for optimization and quantum machine learning~\cite{Cerezo2021, Farhi2014}, it became natural to hybridize with classical learning systems. One line of work replaces parts of classical architectures with PQCs to exploit quantum representations~\cite{DallaireDemers2018, Lloyd2018, Cong2019, Bausch2020}. A complementary line places neural networks as a post-processing (or pre-processing) module around a variational circuit. Here, the neural encoders can learn problem-adapted embeddings and warm-starts~\cite{PerezSalinas2020, Schuld2019, Grant2019, Mari2020, Miao2023}, while the learned post-processing can mitigate readout bias, denoise observables, and reweight samples~\cite{Bravyi2021, Nation2021, Lowe2021}. These developments point toward a broader opportunity: to integrate quantum state preparation and classical neural learning within a common variational framework. Realizing this opportunity requires understanding how the learned transformation, its measurement estimator, and the available resources jointly determine physical consistency and useful expressivity.

In this work, we develop such an integrated framework for quantum--neural eigensolvers, focusing on learnable transformations that are diagonal in the computational basis and evaluated through classical processing of measurement records. This setting makes the division of roles between quantum preparation and neural learning explicit, while allowing their representational and statistical properties to be analyzed together. A representative class learns amplitude weights and evaluates observables after reweighting and renormalizing measurement data; we refer to this class as diagonal non-unitary post-processing (DNP). Placing amplitude reweighting and norm-preserving phase learning within the same framework allows us to identify which features of the quantum--classical interface govern expressivity, measurement cost, and variational consistency.

The central question is whether a quantum--neural architecture can combine (i) self-contained training without prior knowledge of the ground state, (ii) polynomial quantum and classical resources, and (iii) a variationally consistent objective with controlled finite-shot evaluation. Diagonal amplitude reweighting exposes a fundamental tension among these requirements: its energy functional is a normalized ratio whose numerator and denominator are estimated from distinct measurement ensembles. Incomplete sampling can therefore affect the two parts asymmetrically, allowing neural optimization to amplify errors in empirical normalization. Resolving this tension requires treating the structure of the learned transformation and the statistics of its estimator as a common design problem.

Our theoretical analysis connects this statistical tension to the expressive capacity of the hybrid representation. For unconstrained amplitude reweighting, we identify sampled-support mismatch as a mechanism for unbounded empirical objectives and quantify the exponential cost of enforcing support inclusion in a uniform-sampling setting. Bounding the neural outputs can stabilize estimation, but also restricts the target distributions accessible to the post-processing layer. We prove a generic noncoverage result for polynomial-range reweighting of a fixed ansatz distribution toward Haar-random targets, and an exponential range obstruction for geometrically local constant-depth ansatz circuits under near-tensorizability and persistent block-wise mismatch conditions. These results characterize concrete limits of amplitude learning and provide structural criteria for designing the neural interface.

We realize these design principles in the unitary variational quantum-neural hybrid eigensolver~(U-VQNHE), organized around norm preservation by construction. A quantum circuit supplies the Born probabilities and initial phases, while a classical neural network learns a complementary phase landscape. The associated diagonal unitary is evaluated virtually through post-processing, so the learned transformation need not be compiled into the state-preparation circuit. Its exact energy obeys the Rayleigh--Ritz bound, and independent finite-shot evaluations admit concentration guarantees without a learned amplitude-range factor or a stochastic normalization denominator. Accessing the real and imaginary contributions requires at most twice the measurement settings of the underlying circuit-transformation scheme. Numerical experiments on Ising and disordered XYZ models demonstrate how this allocation of quantum and classical roles yields practical energy improvements with controlled estimation.

Our contributions can be summarized as follows:
\begin{itemize}
    \item We establish an integrated framework for diagonal quantum--neural hybridization, connecting self-contained training, measurement resources, and variational consistency to the structure of the learned transformation.
    \item We characterize statistical and representational constraints on amplitude reweighting through sampled-support analysis, a Haar noncoverage theorem, and a constant-depth ansatz obstruction under explicit near-tensorizability and mismatch assumptions.
    \item We present U-VQNHE as a norm-preserving quantum--neural architecture, establish exact variational consistency and concentration bounds for independent finite-shot evaluations, and demonstrate improved accuracy and robustness in spin-model benchmarks.
\end{itemize}

\section{Diagonal non-unitary post-processing on VQE}

In the VQE framework, a parameterized unitary $U(\boldsymbol{\theta})$ prepares a variational state
\begin{equation}
    \ket{\psi(\boldsymbol{\theta})} = U(\boldsymbol{\theta}) \ket{0}^{\otimes n},
\end{equation}
which yields the energy expectation value 
$E(\boldsymbol{\theta}) = \bra{\psi(\boldsymbol{\theta})} H \ket{\psi(\boldsymbol{\theta})}$, minimized with respect to $\boldsymbol{\theta}$.  Here the Hamiltonian is expressed as a weighted sum of Pauli strings,
\begin{equation}
H = \sum_P c_P P,
\label{eq:pauli_hamiltonian_app}
\end{equation}
whose expectation values are measurable directly from the quantum circuits via basis transformation~\cite{Peruzzo2014, Wecker2015}. This setup guarantees, from the Rayleigh-Ritz variational principle, that for any circuit ansatz it holds that
\begin{equation}
\min_{\boldsymbol{\theta}} E(\boldsymbol{\theta}) \geq E_{\text{gs}},
\end{equation}
where $E_{\text{gs}}$ is the ground-state energy of the system. Thus, the goal of VQE is to find the set of parameters $\boldsymbol{\theta}$ that minimizes the expectation value of the Hamiltonian via a classical optimizer.


\subsection{Diagonal non-unitary post-processing}
\label{sec:dnp}
DNP extends this paradigm by introducing a classical post-processing procedure that reshapes the measurement distribution generated by the circuit. Instead of directly evaluating the expectation value of the Hamiltonian on $\ket{\psi(\boldsymbol{\theta})}$, it applies a diagonal reweighting operation represented by a generally non-unitary map
\begin{equation}
    \mathcal{D}_f(\rho)
    = \frac{D_f \rho D_f^\dagger}{\Tr[D_f \rho D_f^\dagger]},
    \qquad
    D_f = \sum_{x\in\Omega} f(x)\,\ket{x}\!\bra{x},
    \label{eq:nonunitary_channel}
\end{equation}
where $f(x)\in\mathbb{R}$ (or $\mathbb{C}$) is a parametrized neural network, acting as a function of the measurement output $x$. Here, $\Omega$ denotes the domain of measurement records used in post-processing: in the standard setting $\Omega=\{0,1\}^n$ and $x=s$ is the computational-basis bit string of the system qubits, whereas in some algorithmic variants $\Omega$ may be extended to include additional measured bits or exclude certain bits. Although it can be generalized to encompass complex numbers, we discuss real-valued neural networks in what follows for simplicity.

The transformation acts as a nonlinear map on the density operator and, for a pure input state $\rho=\ket{\psi(\boldsymbol{\theta})}\!\bra{\psi(\boldsymbol{\theta})}$, produces the normalized, post-processed state
\begin{equation}
    \ket{\tilde{\psi}(\boldsymbol{\theta})}_f
    = \frac{D_f \ket{\psi(\boldsymbol{\theta})}}
           {\| D_f \ket{\psi(\boldsymbol{\theta})} \|},
    \label{eq:post_state}
\end{equation}
where $\| D_f \ket{\psi(\boldsymbol{\theta})} \| = \sqrt{\bra{\psi(\boldsymbol{\theta})}D_f^\dagger D_f\ket{\psi(\boldsymbol{\theta})}}$ ensures proper normalization. The corresponding variational energy functional becomes
\begin{equation}
    E_f(\boldsymbol{\theta})
    = \frac{\bra{\psi(\boldsymbol{\theta})} D_f^\dagger H D_f \ket{\psi(\boldsymbol{\theta})}}
           {\bra{\psi(\boldsymbol{\theta})} D_f^\dagger D_f \ket{\psi(\boldsymbol{\theta})}}.
    \label{eq:Ef_general}
\end{equation}
The neural network takes this energy functional as its loss function and repeatedly optimizes the parameters to minimize the loss.

This expression generalizes the standard VQE energy functional, in which the observable expectation value is a linear functional of the prepared quantum state. The diagonal map $D_f$ redefines the variational objective as a nonlinear ratio of reweighted expectation values, effectively acting as a data-dependent filter on the probability amplitudes of $\ket{\psi(\boldsymbol{\theta})}$ in the computational basis. It modifies the underlying Born distribution for each Pauli term in the computational basis, while preserving normalization through the denominator in Eq.~\eqref{eq:Ef_general}. This construction recovers VQNHE~\cite{Zhang2022}, which uses a single classical neural network $f(s): \{0,1\}^n \!\to\! \mathbb{R}$ or $\mathbb{C}$, trained on top of the quantum circuit to minimize $E_f(\boldsymbol{\theta})$. Joint training of the quantum and neural parameters is possible, but only when sufficient accuracy is guaranteed by the quantum-circuit measurements. We discuss this in Appendix~\ref{app:neuralnetwork}. Expanding the choice of measurement record $x\in\Omega$ and the corresponding estimators yields a broader representation encompassing some algorithmic variants (Appendix~\ref{app:extensions}), all sharing the same diagonal non-unitary structure.

Consequently, Eq.~\eqref{eq:Ef_general} can be regarded as a general variational formulation in which the energy functional depends not only on the unitary circuit parameters $\boldsymbol{\theta}$ but also on the classical parameters defining a nonlinear transformation of the sampling distribution. This abstraction places the aforementioned algorithms within a unified theoretical framework and clarifies that such methods correspond to optimizing an expectation value with respect to a post-processed quantum state $\ket{\tilde{\psi}(\boldsymbol{\theta})}_f$ defined by a diagonal non-unitary map. This formulation highlights the broader applicability of parametrized quantum circuits when combined with diagonal channels of the form~\eqref{eq:nonunitary_channel}, which may improve representational accuracy and convergence efficiency of hybrid quantum--classical algorithms at the cost of modified normalization and sampling complexity, as analyzed in the following sections.

\subsection{General expectation-value evaluation via circuit transformations}

Following Ref.~\cite{Zhang2022}, the neural network training objective in Eq.~\eqref{eq:Ef_general} reduces to estimating normalized contributions of the form
\begin{equation}
    \frac{\bra{\psi(\boldsymbol{\theta})} D_f^\dagger P D_f \ket{\psi(\boldsymbol{\theta})}}
         {\bra{\psi(\boldsymbol{\theta})} D_f^\dagger D_f \ket{\psi(\boldsymbol{\theta})}}
    \quad \text{for each Pauli string } P.
    \label{eq:normalized_P_term}
\end{equation}

The expectation value entering the diagonal post-processing functional can be evaluated using a circuit-transformation scheme. The basic idea follows the original formulation of VQNHE~\cite{Zhang2022}: for each Pauli term of the Hamiltonian, the measurement basis is rotated so that the corresponding operator is effectively diagonalized, and the resulting bit strings are reweighted by the diagonal function $f$ during classical post-processing.

Concretely, for each Pauli term $P$, a short, fixed-depth transformation is applied to align the measurement basis with the eigenbasis of $P$. This involves adding a local entangling operation that coherently encodes the relevant Pauli parity onto a single reference qubit, followed by appropriate single-qubit rotations ($H$ or $R_X(\pi/2)$) that map all Pauli factors to the $Z$ basis. After this step, the circuit is measured in the computational basis, yielding an outcome record $x\in\Omega$ (typically $x=s\in\{0,1\}^n$). Conceptually, this converts the action of the operator $P$ into a pairwise-diagonal form, where each observed bit string $s$ is naturally associated with its partner $\tilde{s}_P$ obtained by flipping the bits corresponding to the non-$Z$ components of $P$. This allows for efficient evaluation of the energy estimator. Refer to Appendix~\ref{app:vqnhe} for details and notations used.


\subsection{Criteria for Stable and Variationally Safe Training}
\label{sec:requirements}

Before quantifying resource requirements, we make explicit which properties a neural post-processing module must preserve in order to be a meaningful extension of a variational quantum algorithm. The following desiderata formalize what it means for a post-processing layer to be ``scalable'' and ``physically interpretable,'' and they also clarify precisely where DNP fails under finite sampling.

\begin{definition}[Desiderata for data-driven neural post-processing]
\label{def:desiderata}
We say that a neural post-processing protocol is stable and variationally safe if it satisfies all of the following:
\begin{enumerate}
    \item {\bf Self-contained training (no prior knowledge).} The neural network is trained solely from measurement outcomes on the quantum device during the variational loop, without access to the exact ground state, its energy, or large precomputed training sets.
    \item {\bf Polynomial resource scaling.} The number of distinct quantum circuits, the number of shots per circuit, and the classical overhead (including neural-network size and training cost) scale at most polynomially with the number of qubits $n$.
    \item {\bf Variational consistency.} The reported energy values remain consistent with the Rayleigh--Ritz principle: in the ideal noiseless limit, the objective corresponds to the expectation value of $H$ on a normalized quantum state, and therefore cannot undershoot the true ground-state energy $E_{\mathrm{gs}}$.
\end{enumerate}
\end{definition}

\begin{remark}[Why variational consistency becomes nontrivial for DNP]
For any exactly normalized post-processed state $\ket{\psi}$ (including the formal DNP state in Eq.~\eqref{eq:post_state}), the variational bound $\bra{\psi} H \ket{\psi} \ge E_{\mathrm{gs}}$ holds automatically. The subtlety is that practical DNP implementations do not evaluate Eq.~\eqref{eq:Ef_general} exactly: they estimate its numerator and denominator from independent finite-shot ensembles and optimize the resulting empirical ratio $\hat{E}_f$. This empirical ratio need not coincide with the expectation value of any normalized quantum state and can therefore violate the variational bound unless one expends exponentially many samples.
\end{remark}

The DNP schemes are, at first sight, compatible with the first two desiderata: the network can be trained on-device without supervision, and the diagonal structure keeps the additional classical overhead polynomial. Our central message is that the third desideratum is the bottleneck: in DNP, the normalization itself becomes a sampling-limited quantity. The rest of the paper makes this statement quantitative by analyzing (i) when the empirical normalization can fail catastrophically and (ii) how stabilizing constraints on the neural network introduce an unavoidable tradeoff between expressiveness and sampling cost.

\section{Exponential resource requirement for DNP}
\label{sec:exp_resource_dnp}

As established in Eq.~\eqref{eq:Ef_general}, DNP transforms the standard variational objective into a normalized ratio,
$E_f(\boldsymbol{\theta}) = \sum_P c_P \frac{N_f(P)}{Z_f}$, where
\begin{equation}
\begin{aligned}
N_f(P) &= \bra{\psi(\boldsymbol{\theta})} D_f^\dagger P D_f \ket{\psi(\boldsymbol{\theta})}, \\
Z_f &= \bra{\psi(\boldsymbol{\theta})} D_f^\dagger D_f \ket{\psi(\boldsymbol{\theta})}
    = \mathbb{E}_{s\sim p_\psi}\!\left[\,|f(s)|^2\,\right].
\label{eq:estimators}
\end{aligned}
\end{equation}
Both energy evaluation and gradient estimation therefore hinge on accurately estimating the normalization $Z_f$.
When $D_f$ is non-unitary, the associated weights $w(s)=|f(s)|^2$ distort the underlying Born distribution $p_\psi(s)$, often concentrating it on a few rare outcomes as training proceeds.
This can induce statistical instabilities in normalized observables and cause divergence during neural-network training.
In the following, we show that preventing such divergence requires exponentially many measurements.

\subsection{Shot-based formulation}
In the following sections, we discuss the DNP formulation in the context of VQNHE~\cite{Zhang2022} as a representative example. In shot-based implementations, the numerator and denominator of Eq.~\eqref{eq:Ef_general} are obtained from distinct measurement ensembles. The denominator,
\begin{equation}
    Z_f = \mathbb{E}_{s\sim p_\psi}\!\left[\,|f(s)|^2\,\right],
\end{equation}
is evaluated by sampling the ansatz circuit $U(\boldsymbol{\theta})$ directly in the computational basis, while the numerator requires additional measurement circuits. For a given Pauli string $P$, the corresponding expectation value is estimated from outcomes $\{s_i\}\!\sim p_P$ (the distribution induced by the measurement-augmented circuit for term $P$). Accordingly, we write the probability distribution of the ansatz $p_\psi$ as $p_I$.

With $M_P$ measurements for each Pauli term and $M_I$ measurements for the ansatz, the finite-shot estimators are
\begin{equation}
\begin{aligned}
    \widehat{N}_f(P) &= \frac{1}{M_P}\!\sum_{i=1}^{M_P}
        \sigma_P(s_i)\, f(s'_i)\,f(\tilde{s}'_{i,P}), \\
    \widehat{Z}_f &= \frac{1}{M_I}\!\sum_{i=1}^{M_I} f(s_i)^2,
    \label{eq:empirical_estimators_star}
\end{aligned}
\end{equation}
where $s_i$ denotes the sampled bit string. Here, $s'_i$ is obtained by setting the star-qubit bit $s_{i\star}$ to zero, and $\tilde{s}'_{i,P}$ is the paired bit string defined by the Pauli-induced mapping for term $P$ (see Appendix~\ref{app:vqnhe} for details).
The sign factor $\sigma_P(s_i) = (-1)^{s_{i\star}}$ encodes the parity of the star-qubit outcome $s_{i\star}$ and corresponds to the measured eigenvalue $\pm1$ of $P$. In this measurement setting, the general parity rule reduces to this single-bit sign.

\subsection{Divergence of unconstrained DNP under a sub-exponential number of measurements}

The stability of the DNP estimator depends crucially on two factors: (i) the degree of overlap between the sampling supports of the numerator and denominator in Eq.~\eqref{eq:empirical_estimators_star}, and (ii) the constraints imposed on the neural network that defines the weights. In this section, we show that if no explicit regularization or range constraint is applied to the neural-network outputs---beyond the definitional requirement that $f(s)\ge 0$ for all $s$, since each $f(s)$ acts as a nonnegative weight on a probability component---then exponentially many measurements are required to guarantee arbitrary accuracy. Under any sub-exponential measurement budget, the estimator is generally vulnerable to divergent behavior, often driving the neural network toward extreme values.

To address these issues, we introduce notation for the sampled supports of the numerator and denominator. Let $B_a$ denote the set of bit strings actually observed from the ansatz circuit, and let $B_{m,P}$ denote the set of bit strings observed from the measurement circuit associated with term $P$. The total set of configurations that can contribute to the numerator is
\begin{equation}
    B_M \;=\; \bigcup_P \left\{\,s',\,\tilde{s}'_P : s'\!\in B_{m,P}\,\right\},
    \label{eq:BM_redef}
\end{equation}
where $\tilde{s}'_P$ denotes the partner configuration paired with $s'$ by the Pauli term $P$. For the empirical ratio estimator to be stably normalized, it is necessary that every configuration appearing in the numerator is also sampled; that is,
\begin{equation}
    B_M \subseteq B_a.
    \label{eq:subset_condition}
\end{equation}

\begin{proposition}[Support mismatch yields an unbounded empirical objective]
\label{prop:unbounded_empirical}
Fix a finite-shot dataset used to evaluate the estimators in Eq.~\eqref{eq:empirical_estimators_star}, and let $B_a$ and $B_M$ denote the sampled supports of the denominator and numerator, respectively. If the support-inclusion condition~\eqref{eq:subset_condition} is violated, i.e., $B_M \not\subseteq B_a$, then the empirical DNP objective $\hat{E}_f=\sum_P c_P \hat{N}_f(P)/\hat{Z}_f$ is, in general, not bounded from below over nonnegative weight functions $f$. More precisely, whenever there exists a configuration $s_\star\in B_M\setminus B_a$ that appears in at least one numerator contribution with a negative effective coefficient, one can choose a sequence of nonnegative functions $\{f_k\}_{k\ge 1}$ such that $\hat{E}_{f_k}\to -\infty$ while $\hat{Z}_{f_k}$ remains unchanged.
\end{proposition}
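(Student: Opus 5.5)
The proof is a direct construction: since $\hat{Z}_f$ depends only on the values $f(s)$ for $s\in B_a$, any configuration $s_\star\notin B_a$ carries a free weight that the denominator never sees. We fix a baseline nonnegative $f_0$ with $\hat{Z}_{f_0}>0$; for instance, $f_0\equiv 1$ gives $\hat{Z}_{f_0}=1$. We then define
\[
f_k(s)=f_0(s)\ \text{for } s\neq s_\star,\qquad f_k(s_\star)=k .
\]
Because $s_\star\notin B_a$, every term $f_k(s_i)^2$ in $\hat{Z}_{f_k}$ equals $f_0(s_i)^2$. Hence $\hat{Z}_{f_k}=\hat{Z}_{f_0}$ for all $k$, which proves the second claim immediately.

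Next we write the numerator as an explicit function of $t=f(s_\star)$. From Eq.~\eqref{eq:empirical_estimators_star}, each sample $i$ of term $P$ contributes $c_P\sigma_P(s_i)f(s'_i)f(\tilde s'_{i,P})/M_P$. The samples split into three classes:
\begin{itemize}
\item samples in which neither $s'_i$ nor $\tilde s'_{i,P}$ equals $s_\star$, giving a constant $C_0$;
\item samples in which exactly one of them equals $s_\star$, giving terms linear in $t$;
\item samples in which $s'_i=\tilde s'_{i,P}=s_\star$, which happens for diagonal ($Z$-type) $P$, giving terms quadratic in $t$.
\end{itemize}
This gives
\[
\sum_P c_P\hat N_{f_k}(P)=C_0+\alpha k+\beta k^2 ,
\]
where $\alpha=\sum c_P\sigma_P(s_i)f_0(\text{partner})/M_P$ over the linear-class samples, and $\beta=\sum c_P\sigma_P(s_i)/M_P$ over the quadratic-class samples. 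The hypothesis that $s_\star$ "appears with a negative effective coefficient" is read as follows: the leading coefficient of this polynomial in $t$ is negative. This means $\beta<0$, or $\beta=0$ and $\alpha<0$. Since $\hat{Z}_{f_k}$ is a fixed positive constant, $\hat{E}_{f_k}=(C_0+\alpha k+\beta k^2)/\hat Z_{f_0}\to-\infty$.

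The main obstacle is making this hypothesis precise and showing it is attainable, because $\alpha$ depends on the baseline values $f_0$ at the partners of $s_\star$. If the quadratic coefficient vanishes, we use the remaining freedom in $f_0$ to make $\alpha$ strictly negative. The partners $\tilde s'$ that lie outside $B_a$ are also free, so we can set their $f_0$-values to zero whenever their term enters with the wrong sign. If a negatively signed partner lies inside $B_a$, we keep its value fixed and positive, which leaves $\hat Z$ unchanged.

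If cancellations among several samples still make $\alpha=0$ generically, we perturb a single partner weight. The coefficient $\alpha$ is linear in the partner weights, with each weight multiplied by its aggregate signed coefficient. The "negative effective coefficient" assumption guarantees that at least one such aggregate coefficient is nonzero and can be made to dominate. This is exactly the sense in which the statement says "in general." A simple concrete instance is a single off-diagonal term $P$, with $c_P\sigma_P<0$ on a sample whose pair is $(s_\star,u)$ and $u\in B_a$. Then $\alpha=c_P\sigma_P f_0(u)/M_P<0$ for any $f_0(u)>0$, and the divergence follows directly.
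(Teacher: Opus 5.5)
Your proof is correct and uses the same construction as the paper: keep $f$ fixed away from $s_\star$, send $f_k(s_\star)=k\to\infty$, and use $s_\star\notin B_a$ to conclude that $\hat Z_{f_k}$ is unchanged. Expanding the numerator as $C_0+\alpha k+\beta k^2$ and reading the hypothesis as a negative leading coefficient is more careful than the paper's proof, which tacitly assumes the single negative contribution is neither cancelled by other samples nor outweighed by a quadratic contribution from a diagonal Pauli term; one small fix is that in your last paragraph the aggregate partner coefficient you make dominant must be negative, not merely nonzero.
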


\begin{proof}---By assumption, there exists a term $P_\star$ and a sample outcome in the $P_\star$-measurement ensemble $B_{m,P_\star}$ for which either $s' = s_\star$ or $\tilde{s'}_{P_\star}=s_\star$ appears in the numerator with an overall negative prefactor. Freeze all neural outputs except at $s_\star$, and define $f_k(s_\star)=k$ with $k\to\infty$. Because $s_\star \notin B_a$, the denominator estimator $\hat{Z}_f$ (an empirical average over $B_a$) is independent of $k$. In contrast, the numerator estimator for $P_\star$ contains a contribution proportional to $k$ with negative sign, so $\hat{N}_{f_k}(P_\star)$ decreases without bound as $k\to\infty$, and therefore so does $\hat{E}_{f_k}$.
\end{proof}

Proposition~\ref{prop:unbounded_empirical} shows that, under finite sampling, the DNP loss landscape can acquire directions in which the empirical objective decreases without any normalization penalty. This is a structural instability of the ratio estimator: it disappears only when the denominator samples all configurations that can influence the numerator, i.e., when~\eqref{eq:subset_condition} holds. If this subset condition is violated, some configurations appear only in the numerator and never in the denominator, so their contributions are not properly balanced in the normalized ratio $\widehat{E}_f = \sum_P c_P\widehat{N}_f(P) / \widehat{Z}_f$.

To see the consequences more explicitly, consider a configuration $s^\ast\in B_M\setminus B_a$ that arises in the numerator but is never observed in the ansatz samples. Because the denominator $\widehat{Z}_f$ is computed only over $B_a$, there is no term $|f(s^\ast)|^2$ to counterbalance the corresponding numerator contribution $c_P\,f(s^\ast)\,f(\tilde{s}^\ast_P)\,\sigma_P(s^\ast)$. As a result, this contribution can grow arbitrarily large in magnitude without being compensated by the normalization. During neural-network training, this imbalance can induce a feedback loop: the loss gradient encourages the mismatching terms to increase whenever doing so lowers the estimated energy, whereas the denominator remains insensitive to this change as long as $s^\ast$ is not sampled. Consequently, $f(s)$ can be driven toward extreme values on such unseen configurations, pushing $\widehat{E}_f$ toward divergence even though all $f(s)$ remain nonnegative and numerically finite.

To quantify how costly it is to avoid this scenario, let $N_M = |B_M|$ denote the number of distinct numerator configurations, which can be as large as $2^n$. Determining the exact number of ansatz shots required to guarantee the support-inclusion event $B_M \subseteq B_a$ is nontrivial in general, because $B_M$ itself depends on the measurement circuits and the finite-shot data. Nevertheless, the scaling is already visible in the most favorable case where the ansatz distribution is uniform over all $2^n$ computational-basis strings. Conditioned on a fixed set $B_M$ with $N_M=\abs{B_M}$, collecting all configurations in $B_M$ by repeated i.i.d. uniform sampling is precisely a coupon-collector problem~\cite{Neal2008}. In this setting, the expected number of ansatz shots required to observe every configuration in $B_M$ at least once is
\begin{equation}
    \mathbb{E}[M_I] = 2^n H_{N_M},
    \label{eq:coupon_collector}
\end{equation}
where $H_{N_M}=\sum_{i=1}^{N_M}\frac{1}{i}=\Theta(\log N_M)$ is the $N_M$-th harmonic number.

\begin{proposition}[Coupon-collector scaling for support inclusion]
\label{prop:coupon_collector}
Assume that ansatz measurements are i.i.d. and uniform on $\{0,1\}^n$, and condition on a fixed numerator support set $B_M \subseteq \{0,1\}^n$ with $N_M=\abs{B_M}$. Let $M_I$ denote the (random) number of ansatz shots required until every configuration in $B_M$ has appeared at least once in the ansatz samples (equivalently, until $B_M \subseteq B_a$ holds). Then, $\mathbb{E}[M_I]=2^n H_{N_M}$ as in Eq.~\eqref{eq:coupon_collector}. Moreover, for any $\delta\in(0,1)$, if $M_I^{(\delta)}\ge 2^n\log\big(N_M/\delta\big)$ ansatz shots are taken, then the support inclusion event holds with probability at least $1-\delta$. In particular, if $N_M$ grows exponentially with $n$, then enforcing $B_M \subseteq B_a$ with non-negligible success probability requires exponentially many ansatz measurements.
\end{proposition}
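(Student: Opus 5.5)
We split the argument into three parts: the expectation, the high-probability bound, and the exponential consequence.

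\emph{Expectation.} Use the standard coupon-collector decomposition, adapted to a target subset of size $N_M$ inside a universe of size $2^n$. Write $M_I=\sum_{j=1}^{N_M} T_j$. Here $T_j$ is the number of ansatz shots taken after $j-1$ distinct elements of $B_M$ have been collected, up to and including the shot that produces a new element of $B_M$.

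Because shots are i.i.d.\ uniform, each shot hits one of the $N_M-j+1$ missing configurations with probability $(N_M-j+1)/2^n$, independently of the past. Hence $T_j$ is geometric with mean $2^n/(N_M-j+1)$. Shots landing outside $B_M$, or on already-seen elements, only count as failures, so they do not change this structure. Summing gives
\[
\mathbb{E}[M_I]=2^n\sum_{j=1}^{N_M}\frac{1}{N_M-j+1}=2^n H_{N_M}.
\]

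\emph{High-probability bound.} Use a union bound rather than concentration of the geometric sum. Fix $M$ shots and a configuration $b\in B_M$. The probability that $b$ is never observed is $(1-2^{-n})^{M}\le e^{-M/2^n}$. Applying the union bound over the $N_M$ elements of $B_M$,
\[
\Pr[B_M\not\subseteq B_a]\le N_M e^{-M/2^n}.
\]
This is at most $\delta$ as soon as $M\ge 2^n\log(N_M/\delta)$, which gives the stated sufficiency.

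\emph{Exponential necessity.} The phrase ``requires'' needs a lower bound, and the sufficiency result alone does not supply one. Two routes are available.

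\begin{itemize}
\item \emph{Mean plus concentration.} The mean $2^nH_{N_M}$ is already $\Omega(2^n)$. A lower-tail bound on $M_I$ would follow from Chebyshev, since $\mathrm{Var}(M_I)\le \sum_j 2^{2n}/(N_M-j+1)^2\le (\pi^2/6)\,2^{2n}$.
\item \emph{Direct pigeonhole (preferred).} With $M$ shots, $|B_a|\le M$, so $B_M\subseteq B_a$ is impossible whenever $M<N_M$. If $N_M=2^{cn}$, success with any positive probability therefore needs at least $2^{cn}$ shots.
\end{itemize}

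For the sharper claim that the scaling is of order $2^n$ even for moderate $N_M$, a further bound applies. The last uncollected element is missed with probability at least $(1-2^{-n})^M$, which stays close to $1$ unless $M=\Omega(2^n)$. This shows that non-negligible success probability requires $M=\Omega(2^n)$ whenever $N_M\ge1$. The exponential-growth clause therefore holds a fortiori.

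\emph{Expected obstacle.} The main difficulty is interpretive rather than technical: fixing what ``non-negligible success probability requires'' means. The plan is to state explicitly the lower bound
\[
\Pr[B_M\subseteq B_a]\le 1-(1-2^{-n})^M,
\]
which tends to $0$ for $M=o(2^n)$.
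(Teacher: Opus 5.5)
Your expectation argument (geometric waiting times with success probability $(N_M-k)/2^n$, summed to $2^n H_{N_M}$) and your union bound $\Pr[B_M\not\subseteq B_a]\le N_M e^{-M/2^n}$ match the paper's proof step for step. The paper stops there and leaves the final ``requires exponentially many'' clause as an unproved remark, even though a sufficient shot count alone does not establish necessity. Your necessity bounds are correct and actually justify that clause: $M\ge N_M$ by pigeonhole, which holds for any sampling distribution, and $\Pr[B_M\subseteq B_a]\le 1-(1-2^{-n})^M\le M/2^n$ for any fixed $b\in B_M$.
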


\begin{proof}---For the expectation, reveal the desired configurations one by one. After $k$ distinct elements of $B_M$ have been observed, the probability that the next uniform ansatz shot produces a new element of $B_M$ is $(N_M-k)/2^n$. Hence, the expected waiting time to increase the number of collected elements from $k$ to $k+1$ is $2^n/(N_M-k)$. Summing over $k=0,1,\dots,N_M-1$ gives
\begin{equation}
\mathbb{E}[M_I]=\sum_{k=0}^{N_M-1}\frac{2^n}{N_M-k}=2^n\sum_{j=1}^{N_M}\frac{1}{j}=2^n H_{N_M}.
\end{equation}
For the high-probability bound, fix any $s \in B_M$. The probability that $s$ is never observed in $M$ uniform shots is $(1-2^{-n})^{M}\le \exp(-M/2^n)$. By the union bound over the $N_M$ configurations in $B_M$,
\begin{equation}
\Pr(B_M \not\subseteq B_a)\le N_M\exp(-M/2^n).
\end{equation}
Choosing $M=M_I^{(\delta)} \ge 2^n\log(N_M/\delta)$ makes the right-hand side at most $\delta$, proving $\Pr(B_M\subseteq B_a)\ge 1-\delta$.
\end{proof}

Therefore, under any sub-exponential ansatz-shot budget, one should expect $B_a$ to miss some configurations that can contribute to the numerator, so the condition~\eqref{eq:subset_condition} typically fails. Combined with Proposition~\ref{prop:unbounded_empirical}, this explains why the empirical DNP objective can acquire spurious ``downhill'' directions in the polynomial-shot regime, which the neural network may exploit during training.

This effect can be viewed as a fundamental resource--stability tradeoff: a non-unitary map $D_f$ effectively reweights the Born distribution, but each additional degree of freedom in $f(s)$ introduces a potential instability unless the corresponding region of the outcome space is sufficiently sampled. In the limit where $M_I$ scales exponentially, every $s\in B_M$ is represented, and the estimator converges to its well-defined normalized limit. In contrast, with a polynomial number of measurements, the overlap $B_M\cap B_a$ remains partial, the effective normalization denominator $\widehat{Z}_f$ underestimates the true norm, and the post-processed energy collapses toward $-\infty$.

\begin{figure}

\begin{minipage}[t]{0.88\columnwidth}
\centering
\hspace*{-1.2em}
\includegraphics[width=\textwidth]{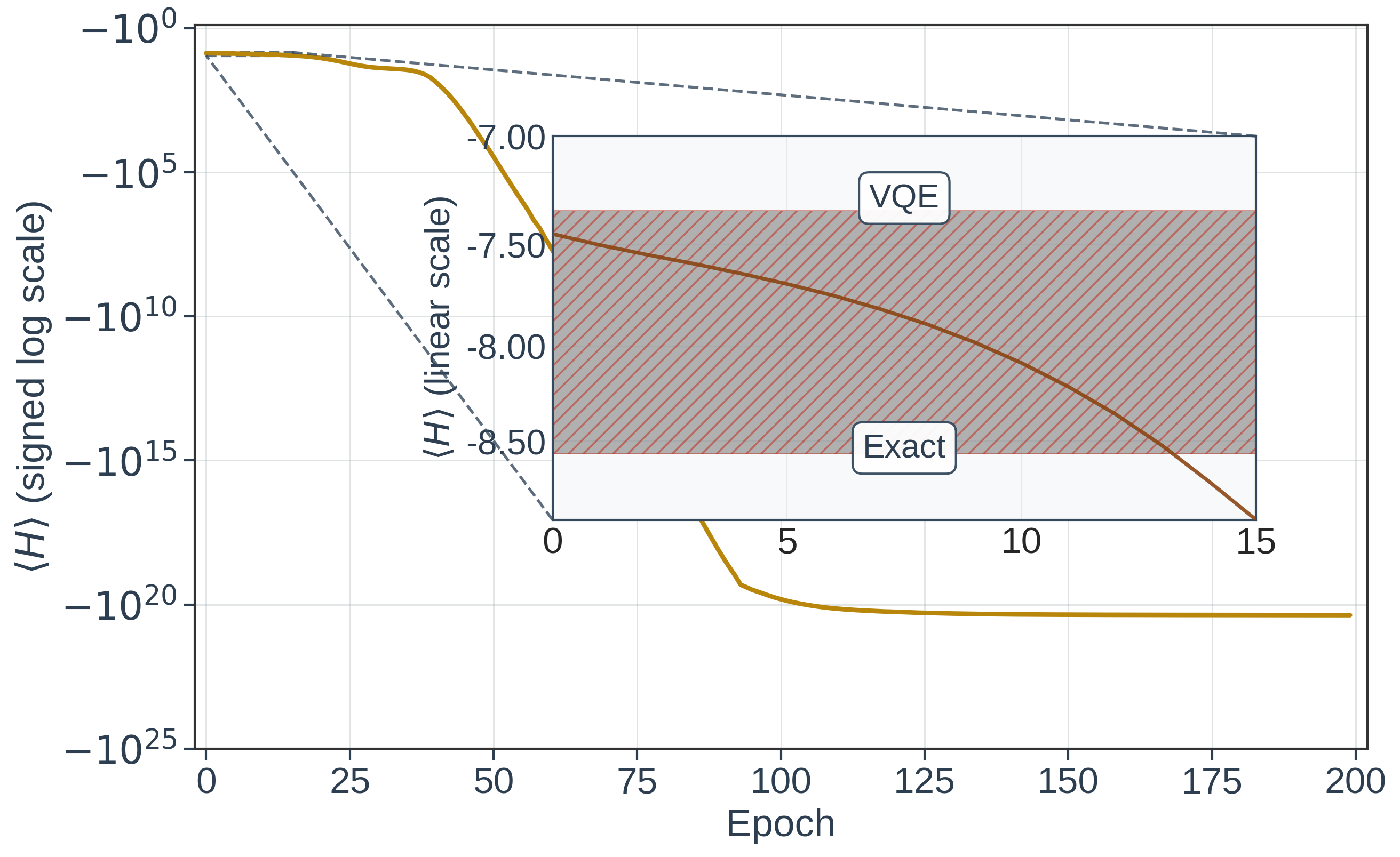}
\textbf{(a)}
\end{minipage}

\vspace{0.5em}

\begin{minipage}[t]{0.82\columnwidth}
\centering
\includegraphics[width=\textwidth]{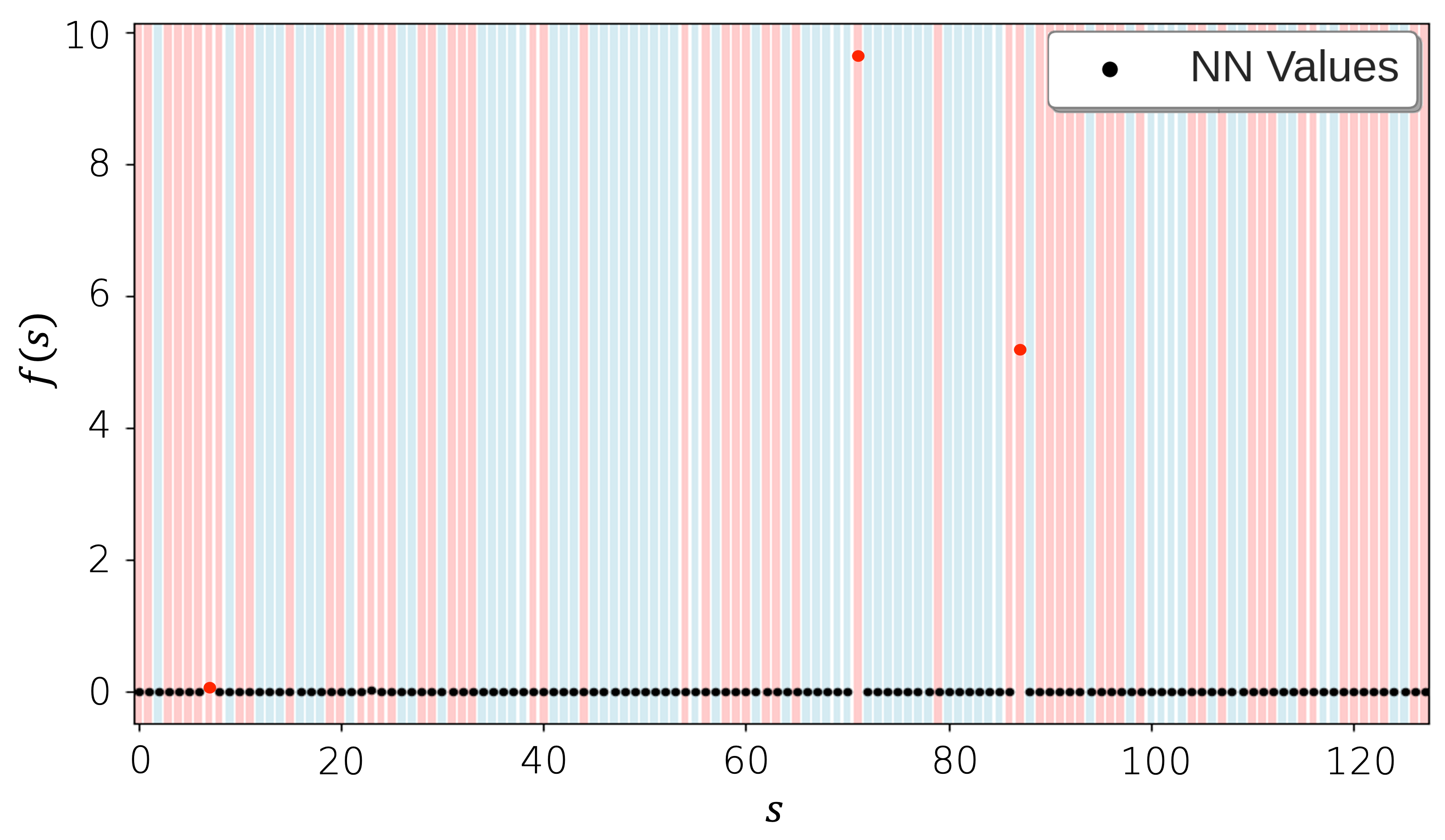}
\textbf{(b)}
\end{minipage}

\caption{\justifying
VQNHE implementation for a 7-site TFIM using 7 qubits.
(a) Training dynamics of the neural network within VQNHE. The vertical axis shows the loss function---the expectation value of the Hamiltonian---plotted on a logarithmic scale. Quantum circuit evaluations are performed using the Qiskit sampler with 500 shots per circuit. The inset highlights the region between the lowest vanilla VQE energy and the exact ground-state energy, noting that values below this range are nonphysical.
(b) Final neural-network outputs after 200 training epochs. The horizontal axis indexes bit strings (in decimal), while the vertical axis shows the corresponding network outputs. Blue lines denote measured bit strings, and red lines denote unmeasured ones. Most outputs cluster near $10^{-6}$, whereas a few extreme values (above $10^{-2}$; red) emerge. Since nearly all bit strings are sampled by the quantum circuits, these red points correspond to strings that contribute only to the numerator.
}
\label{fig:7site}
\end{figure}

We demonstrate this phenomenon using a VQNHE implementation of a 7-qubit transverse-field Ising model~(TFIM), shown in Figure~\ref{fig:7site}. Figure~\ref{fig:7site}(a) displays the optimization trajectory under a shot-based sampler with $500$ measurements per circuit. The estimated energy quickly falls below the exact ground-state value (obtained via exact diagonalization) and diverges to nonphysical magnitudes as the neural network amplifies unsupported configurations.
Figure~\ref{fig:7site}(b) visualizes the trained neural function $f(s)$: outputs corresponding to unmeasured bit strings (red) reach extreme values, while those associated with measured configurations (blue) remain bounded. This confirms that the divergence originates from the incomplete overlap between $B_M$ and $B_a$.

\subsection{Dynamic-range obstruction of bounded DNP}
\label{subsec:dnp_range_obstruction}

To mitigate such divergence while maintaining a practical measurement budget, a straightforward strategy is to impose explicit constraints on the output range of the neural network. Such constraints have been introduced, either implicitly or explicitly, to address training instabilities~\cite{Zhang2022,Ren2025}. In earlier formulations, the network outputs were restricted to be strictly positive, i.e., \(f(s)>0\) for all \(s\). Extending this idea, one can bound the outputs within a finite interval \([1/r,r]\), with \(r>1\), since it is their relative ratios---rather than their absolute magnitudes---that influence the post-processing~\cite{Zhang2022}.

Once the neural-network outputs are confined to a bounded interval, the divergence discussed above can be suppressed by choosing a sufficiently small value of \(r\), which limits the worst-case amplification of the estimator. Moreover, Zhang \textit{et al.} derived that the sufficient measurement budget for achieving an energy error bound \(\epsilon\) is
\begin{equation}
N \geq \frac{9r^4}{4\epsilon^2}.
\label{eq:shot_lowerbound}
\end{equation}
This is a sufficient worst-case prescription rather than a necessary lower bound on the number of shots. Equivalently, it can be interpreted as a lower bound on the required measurement budget: to guarantee that the statistical error does not exceed \(\epsilon\), one must allocate at least \(N\) measurements satisfying Eq.~\eqref{eq:shot_lowerbound}. This form makes the stability--range tradeoff explicit. Increasing \(r\) enlarges the allowed reweighting class, but it also increases the number of measurements required for stable estimation. Hence, when both \(r\) and \(N\) scale polynomially with system size, bounded DNP can avoid the finite-shot divergence that appears in the unrestricted setting.

This finite-shot stabilization, however, addresses only the statistical behavior of the estimator, not the representational power of the post-processing map. Avoiding worst-case divergence, or even attaining a small estimation error for a fixed bounded range, does not ensure that DNP accurately reproduces the true ground state. If \(r\) is chosen too small, the allowed reweighting may be insufficient to transform the ansatz Born distribution into the target ground-state distribution.

Thus bounded DNP exhibits a direct tradeoff between statistical stability and expressive power. Once the neural-network outputs are restricted to a finite range, the post-processing layer can no longer arbitrarily reshape the ansatz distribution; it can only apply a bounded positive reweighting to the computational-basis Born probabilities. More fundamentally, we show below that achieving arbitrarily precise agreement with the target ground-state distribution using a constant-depth ansatz and DNP reweighting can require the range parameter \(r\) to grow exponentially with the number of qubits. Through Eq.~\eqref{eq:shot_lowerbound}, such an exponential range requirement also implies an exponential growth in the measurement budget needed for stable estimation. This reveals a fundamental architectural caveat of non-unitary post-processing, rather than a merely technical instability.

Let the target and ansatz Born distributions be
\begin{equation}
p_n(s)=|\langle s|\phi_0\rangle|^2,
\qquad
q_n(s)=|\langle s|\psi_\theta\rangle|^2 .
\end{equation}
Exact DNP matching requires a positive weight function satisfying
\[
p_n(s)
=
\frac{w(s)q_n(s)}{\sum_t w(t)q_n(t)} ,
\qquad
w(s)>0 .
\]
We define the exact DNP dynamic range as
\begin{equation}
\Gamma^\star(p_n\mid q_n)
=
\inf_w
\left\{
\frac{\max_s w(s)}{\min_s w(s)}
:
p_n(s)
=
\frac{w(s)q_n(s)}{\sum_t w(t)q_n(t)}
\right\}.
\label{eq:dnp_exact_range}
\end{equation}
Although this condition is only necessary for exact transformation, it captures the essential role of the neural network in DNP: the classical post-processing learns the amplitude structure, while the quantum circuit is responsible for representing the phase structure of the target state. If the post-processing function satisfies \(f(s)\in[1/r,r]\), then \(w(s)=|f(s)|^2\) has range at most \(r^4\). Therefore, exact DNP matching requires
\[
r\ge \Gamma^\star(p_n\mid q_n)^{1/4}.
\]

For full-support distributions, the exact DNP range has the closed form
\[
\Gamma^\star(p_n\mid q_n)
=
\frac{\max_s p_n(s)/q_n(s)}
{\min_s p_n(s)/q_n(s)} .
\]
For singular cases where \(q_n\) has zero-probability components, the exact range identity can be extended by introducing an infinitesimal full-support regularization of the ansatz distribution. This regularization extends the scope of the theory, but does not change the narrative. Thus bounded DNP is efficient only when the target distribution differs from the ansatz distribution by a subextensive log-likelihood correction. Equivalently, polynomial-range DNP can explore only a narrow multiplicative neighborhood of the original ansatz distribution. If the exact DNP range requires exponential scaling, then the desired reweighting is not a benign correction to the ansatz statistics, but a distributional transformation that must amplify or suppress some bit strings by exponentially different factors. The following theorem shows that this obstruction is generic for Haar-random target states.

\begin{theorem}[Generic noncoverage of polynomial-range DNP]
\label{thm:haar_range_noncoverage}
Fix any ansatz Born distribution \(q_n\) on \(n\) qubits, and let \(p_\phi\) be the computational-basis Born distribution of a Haar-random target state \(|\phi\rangle\). Then it is exponentially unlikely that \(p_\phi\) can be reached from \(q_n\) using only polynomial DNP range. More precisely, for any \(R_n=\operatorname{poly}(n)\), we have
\begin{equation}
\Pr_\phi
\left[
\Gamma^\star(p_\phi\mid q_n)\le R_n
\right]
\le
2^{-n}+\exp\left[-2^{n/2}/2\right]
\end{equation}
for all sufficiently large \(n\).
\end{theorem}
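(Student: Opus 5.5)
The plan is to work directly with the closed form of $\Gamma^\star$ and the standard representation of Haar-random Born probabilities. Write $N=2^n$. Let $E_1,\dots,E_N$ be i.i.d.\ $\mathrm{Exp}(1)$ random variables and set $S=\sum_t E_t$. Then
\[
(p_\phi(s))_s \overset{d}{=} (E_s/S)_s .
\]
This holds because the modulus-squared entries of a Haar vector are obtained by normalizing i.i.d.\ complex Gaussians, and the squared modulus of a standard complex Gaussian is exponential.

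The first step is a deterministic reduction. Suppose $\Gamma^\star(p_\phi\mid q_n)\le R_n$. Since $\sum_s p_\phi(s)=\sum_s q_n(s)=1$, the ratio $p_\phi(s)/q_n(s)$, taken over the support of $q_n$, satisfies
\[
\max_s p_\phi(s)/q_n(s)\ge 1, \qquad \min_s p_\phi(s)/q_n(s)\le 1 .
\]
A max/min ratio of at most $R_n$ therefore forces every ratio into $[1/R_n,R_n]$. In particular,
\[
p_\phi(s)\ge q_n(s)/R_n \quad\text{for all } s .
\]
If $q_n$ has zeros, then almost surely $p_\phi(s)>0$ there. Under the regularization convention described before the theorem, $\Gamma^\star$ is then infinite (or forced arbitrarily large), so that case only helps. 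The lower-bound inclusion above remains valid in any event, because it is trivial where $q_n(s)=0$. Hence
\[
\{\Gamma^\star\le R_n\}\subseteq\{E_s\ge S\,q_n(s)/R_n\ \forall s\}.
\]

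The second step decouples the random normalization $S$. Split on the event $\{S\le 2N\}$:
\[
\Pr[\Gamma^\star\le R_n]\le \Pr[S>2N]+\Pr\bigl[E_s\ge 2N q_n(s)/R_n\ \forall s\bigr].
\]
On $\{S\le 2N\}$ the original event implies the second event, because the threshold $S q_n(s)/R_n$ only decreases when $S$ is replaced by the larger value $2N$ in the reverse direction; more precisely, $E_s\ge S q_n(s)/R_n$ with $S\le 2N$ does \emph{not} directly give $E_s\ge 2Nq_n(s)/R_n$, so here I would instead split on $\{S\ge N/2\}$ and use the threshold $N q_n(s)/(2R_n)$. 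This gives
\[
\Pr[\Gamma^\star\le R_n]\le \Pr[S<N/2]+\Pr\bigl[E_s\ge N q_n(s)/(2R_n)\ \forall s\bigr].
\]
The thresholds in the second term are now deterministic, so independence and the exponential tail yield
\[
\prod_s e^{-Nq_n(s)/(2R_n)}=\exp\bigl[-N/(2R_n)\bigr]=\exp\bigl[-2^{n}/(2R_n)\bigr].
\]
Since $R_n=\mathrm{poly}(n)$, we have $2^n/(2R_n)\ge 2^{n/2}/2$ for all sufficiently large $n$, which gives the $\exp[-2^{n/2}/2]$ term.

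The third step, which I expect to be the only mildly delicate point, is the lower-tail bound for the Gamma$(N,1)$ variable $S$. A Chernoff bound via $\mathbb{E}e^{-\lambda S}=(1+\lambda)^{-N}$ with $\lambda=1$ gives
\[
\Pr[S<N/2]\le e^{N/2}2^{-N}=\exp\bigl[-N(\ln 2-1/2)\bigr].
\]
This is doubly exponentially small and in particular at most $2^{-n}$ for large $n$. Combining the two pieces proves the stated bound. The main care needed is the direction of the inequality when replacing $S$ by a deterministic value, which is why I split on the lower tail of $S$ rather than the upper tail. The handling of zeros of $q_n$ via the regularization convention should also be stated explicitly.
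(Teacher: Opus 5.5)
Your argument is correct, and it takes a genuinely different route from the paper's.

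The paper never controls the normalization $T=\sum_t Y_t$. It works with the scale-invariant form $\Gamma^\star=\max_s (Y_s/q_n(s))/\min_s (Y_s/q_n(s))$ and bounds the two sides separately:
\begin{itemize}
\item It lower-bounds the numerator by restricting to $\{s:q_n(s)\le 2/d\}$, which has at least $d/2$ elements. The maximum of that many independent exponentials exceeds $\tfrac12\log d$ except with probability at most $e^{-\sqrt d/2}$.
\item It upper-bounds the denominator through the exact identity $\Pr[\min_s Y_s/q_n(s)>t]=e^{-t}$ at $t=\log d$.
\end{itemize}
Together these give $\Gamma^\star\ge d/4$ with probability at least $1-d^{-1}-e^{-\sqrt d/2}$, hence the bound for every $R_n<d/4$. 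The $2^{-n}$ term in the theorem is exactly the failure probability of the denominator step.

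You instead use $\sum_s p_\phi(s)=\sum_s q_n(s)=1$ to force $\max_s p_\phi(s)/q_n(s)\ge 1$. This turns $\{\Gamma^\star\le R_n\}$ into the one-sided event $p_\phi(s)\ge q_n(s)/R_n$ for all $s$. You pay for removing $S$ with a Gamma lower-tail bound, and then apply the same product identity at the much larger threshold $t=2^n/(2R_n)$.

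Your route gives a substantially stronger conclusion, $\exp[-(\ln 2-\tfrac12)2^n]+\exp[-2^n/(2R_n)]$. This is doubly exponentially small for polynomial $R_n$ and nontrivial for any $R_n=o(2^n)$. It shows that the singly exponential $2^{-n}$ in the statement is an artifact of choosing $t=\log d$. The paper's route avoids any handling of the normalization and yields the clean typical-size bound $\Gamma^\star\ge d/4$ directly.

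Two points to clean up:
\begin{itemize}
\item In your second step, delete the abandoned split on $\{S\le 2N\}$ and keep only the split on $\{S\ge N/2\}$. That split is correct because $S\ge N/2$ gives $S\,q_n(s)/R_n\ge N q_n(s)/(2R_n)$.
\item For $q_n$ with zeros, your remark that the inclusion \emph{remains valid in any event} is not justified as written. If $p_\phi$ puts mass outside $\operatorname{supp} q_n$, all ratios $p_\phi(s)/q_n(s)$ on the support can lie below $1$, so the step $\max\ge 1$ fails. You do not need that remark, though. Under the definition of $\Gamma^\star$, the feasible set of weights is then empty almost surely, so $\Gamma^\star=\infty$ and the event is null; this is exactly how the paper disposes of the case. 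Alternatively, your bound is uniform in $q_n$, so it applies verbatim to any full-support regularization $q_{n,\varepsilon}$.
\end{itemize}
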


The proof can be found in the Appendix~\ref{app:dnp_range_obstruction}. This theorem should not be interpreted as saying that physical ground states are Haar-random. Rather, it characterizes the geometry of bounded diagonal reweighting: around any fixed ansatz distribution, the set of targets reachable with polynomial DNP range occupies a negligibly small portion of Hilbert space. We now show that a related obstruction also appears for structured, physically motivated ansatz-target pairs.

A useful certificate for a large DNP range is the Bhattacharyya coefficient,
\begin{equation}
B(p_n,q_n)
=
\sum_s \sqrt{p_n(s)q_n(s)} .
\end{equation}
The Bhattacharyya coefficient quantifies the classical overlap between two probability distributions. As this overlap decreases, the ansatz and target Born distributions become increasingly separated, and a larger diagonal reweighting is required to transform one into the other. Thus, a small Bhattacharyya coefficient provides a direct certificate that exact DNP matching demands a large dynamic range.

\begin{proposition}[Bhattacharyya certificate for DNP range]
\label{prop:bc_range_certificate}
For any common-support distributions \(p_n\) and \(q_n\),
\begin{equation}
\Gamma^\star(p_n\mid q_n)
\ge
B(p_n,q_n)^{-4}.
\label{eq:bc-certificate}
\end{equation}
\end{proposition}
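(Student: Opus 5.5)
The plan is to reduce everything to the closed form of the exact range stated just above the proposition, and then bound the Bhattacharyya coefficient from below by the extreme values of the likelihood ratio. For common-support distributions, let $L(s)=p_n(s)/q_n(s)>0$, and write $m=\min_s L(s)$ and $M=\max_s L(s)$. Then
\[
\Gamma^\star(p_n\mid q_n)=\frac{M}{m}.
\]
If one prefers not to rely on the closed form, the same identity follows directly from Eq.~\eqref{eq:dnp_exact_range}. Any admissible weight satisfies $w(s)=c\,L(s)$ with $c=\sum_t w(t)q_n(t)>0$. Hence every feasible $w$ has $\max_s w(s)/\min_s w(s)=M/m$, and the infimum is attained at this value. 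The claim $\Gamma^\star\ge B^{-4}$ is therefore equivalent to
\[
B(p_n,q_n)^2\ \ge\ \sqrt{m/M}.
\]

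The key step is to write $B$ in two ways, once as an average under $q_n$ and once as an average under $p_n$. First,
\[
B=\sum_s q_n(s)\sqrt{L(s)}\ \ge\ \sqrt{m}\sum_s q_n(s)=\sqrt{m}.
\]
Second, since $\sqrt{p_n q_n}=p_n/\sqrt{L}$,
\[
B=\sum_s p_n(s)\,L(s)^{-1/2}\ \ge\ M^{-1/2}\sum_s p_n(s)=M^{-1/2}.
\]
Multiplying the two lower bounds gives $B^2\ge\sqrt{m/M}=\Gamma^{\star\,-1/2}$. Raising both sides to the power $-2$ yields $\Gamma^\star\ge B^{-4}$, as required. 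Both factors are positive, so no sign issues arise.

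The main thing to be careful about is choosing the right pairing. Using only one of the two estimates fails. For example, $B\ge\sqrt{m}$ together with $M\ge 1$ gives only $B\ge\Gamma^{\star\,-1/2}$, which is weaker than what is needed because $\Gamma^\star\ge 1$. The point is that the lower bound $\sqrt{m}$ comes from averaging under $q_n$ and the lower bound $M^{-1/2}$ comes from averaging under $p_n$, and it is their geometric mean that produces the exponent $-4$.

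Finally, the common-support hypothesis is exactly what makes $L$ finite and positive everywhere, so that $m$ and $M$ are well defined. Singular cases can then be handled by the infinitesimal full-support regularization mentioned above: $B$ is continuous under the regularization, while $\Gamma^\star$ can only diverge, so the inequality persists in the limit.
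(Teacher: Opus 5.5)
Your argument is correct and is essentially the paper's proof. The paper writes $\log\Gamma^\star(p_n\mid q_n)=D_\infty(p_n\Vert q_n)+D_\infty(q_n\Vert p_n)=\log M-\log m$ and applies R\'enyi monotonicity $D_\infty\ge D_{1/2}=-2\log B$ in each direction; your averaging bounds $B\ge M^{-1/2}$ (under $p_n$) and $B\ge\sqrt{m}$ (under $q_n$) are exactly those two inequalities, checked directly instead of cited. Your closing regularization remark is not needed for the stated common-support claim: outside common support no strictly positive $w$ is feasible, so $\Gamma^\star=\infty$ and the inequality holds trivially.
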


\begin{proof}
Consider the R\'enyi-\(\alpha\) divergence
\begin{equation}
D_{\alpha}(p_n\Vert q_n)
=
\frac{1}{\alpha-1}
\log
\left(
\sum_s p_n(s)^{\alpha}q_n(s)^{1-\alpha}
\right).
\end{equation}
The R\'enyi-\(1/2\) divergence 
\begin{equation}
D_{1/2}(p_n\Vert q_n)
=
-2\log
\sum_s\sqrt{p_n(s)q_n(s)}
\end{equation}
satisfies
\[
D_{1/2}(p_n\Vert q_n)
=
-2\log B(p_n,q_n),
\]
and the R\'enyi-\(\infty\) divergence
\[
D_\infty(p_n\Vert q_n)
=
\log \max_s \frac{p_n(s)}{q_n(s)}
\]
is directly related to the exact DNP range. For common-support distributions, the exact range identity gives
\[
\log \Gamma^\star(p_n\mid q_n)
=
D_\infty(p_n\Vert q_n)
+
D_\infty(q_n\Vert p_n).
\]
By monotonicity of R\'enyi divergences,
\[
\begin{aligned}
D_\infty(p_n\Vert q_n)
&\ge
D_{1/2}(p_n\Vert q_n), \\
D_\infty(q_n\Vert p_n)
&\ge
D_{1/2}(q_n\Vert p_n).
\end{aligned}
\]
Since
\[
D_{1/2}(p_n\Vert q_n)
=
D_{1/2}(q_n\Vert p_n)
=
-2\log B(p_n,q_n),
\]
we obtain
\[
\log \Gamma^\star(p_n\mid q_n)
\ge
-4\log B(p_n,q_n).
\]
Exponentiating yields
\[
\Gamma^\star(p_n\mid q_n)
\ge
B(p_n,q_n)^{-4}.
\]
\end{proof}

Therefore, if the classical overlap characterized by the Bhattacharyya coefficient is exponentially small, then exact DNP matching requires exponentially large dynamic range.

We apply this certificate to geometrically local constant-depth ansätze. Such circuits have finite light cones when the circuit geometry is locally connected. Therefore, after discarding buffer regions between separated blocks, the ansatz distribution factorizes over the remaining kept blocks. The target distribution, however, need not factorize exactly. The relevant condition is the near-tensorizability of the target state. We state the following theorem in an abstract form for brevity. A fully explicit version, including the proof, is given in Appendix~\ref{app:dnp_range_obstruction}.

\begin{theorem}[Constant-depth obstruction under near-tensorizability]
\label{thm:near_tensorizable_bc}
Consider a geometrically local constant-depth ansatz circuit and choose separated
kept blocks, with buffer regions larger than the circuit light-cone radius. Let
\(\bar p_n\) and \(\bar q_n\) denote the target and ansatz Born distributions
after discarding the buffers. By locality, the buffer-erased ansatz distribution
factorizes across the kept blocks.
\[
\bar q_n(x_1,\ldots,x_m)
=
\prod_{i=1}^m q_i(x_i).
\]
Assume that the buffer-erased target distribution is sufficiently
near-tensorizable relative to the ansatz mismatch. Concretely, suppose there
exists a constant \(\chi\ge 0\) such that
\[
\bar p_n(x_1,\ldots,x_m)
\le
e^{\chi m}
\prod_{i=1}^m p_i(x_i),
\]
and that the remaining block-wise mismatch has positive density:
\[
\frac{1}{m}
\sum_{i=1}^m
D_{1/2}(p_i\Vert q_i)
-
\chi
\ge
c>0.
\]
Then
\[
B(p_n,q_n)
\le
e^{-\Omega(n)} .
\]
Consequently,
\[
\Gamma^\star(p_n\mid q_n)
\ge
e^{\Omega(n)} .
\]
Thus exact DNP matching from such an ansatz requires exponentially large output range.
\end{theorem}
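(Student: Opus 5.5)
The argument has three steps. First, pass from the full distributions to the buffer-erased ones using monotonicity of the Bhattacharyya coefficient under marginalization. Second, bound the buffer-erased coefficient with the near-tensorizability hypothesis so that it factorizes over blocks. Third, convert the resulting exponentially small overlap into a range bound with Proposition~\ref{prop:bc_range_certificate}.

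\emph{Step 1: reduction to kept blocks.} Write \(s=(x,y)\), where \(x=(x_1,\ldots,x_m)\) are the kept bits and \(y\) are the buffer bits. Then \(\bar p_n(x)=\sum_y p_n(x,y)\), and similarly for \(\bar q_n\). By Cauchy--Schwarz in \(y\),
\[
\sum_y\sqrt{p_n(x,y)q_n(x,y)}\le\sqrt{\bar p_n(x)\bar q_n(x)} .
\]
Summing over \(x\) gives \(B(p_n,q_n)\le B(\bar p_n,\bar q_n)\). This is the data-processing inequality for the Rényi-\(1/2\) divergence, applied to the marginalization channel.

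\emph{Step 2: factorization.} Insert the hypothesis \(\bar p_n\le e^{\chi m}\prod_i p_i\) and the exact factorization \(\bar q_n=\prod_i q_i\). This gives
\[
B(\bar p_n,\bar q_n)\le e^{\chi m/2}\sum_{x}\prod_i\sqrt{p_i(x_i)q_i(x_i)}
= e^{\chi m/2}\prod_i B(p_i,q_i) .
\]
Using \(B(p_i,q_i)=e^{-D_{1/2}(p_i\Vert q_i)/2}\), the right-hand side equals
\[
\exp\Bigl[-\tfrac{1}{2}\Bigl(\textstyle\sum_i D_{1/2}(p_i\Vert q_i)-\chi m\Bigr)\Bigr]\le e^{-cm/2} .
\]
The inequality is exactly the positive-density mismatch hypothesis.

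There is a normalization subtlety here. The paper's convention uses \(e^{\chi m}\) rather than \(e^{\chi m/2}\) in front of the product, so the mismatch condition compares \(\chi\) with the average \(D_{1/2}\) in a way that might not match the exponent literally. I would check this. If needed, I would read the hypothesis as stated and note that the exponent becomes \(-\tfrac12(\sum_i D_{1/2}-\chi m)\), possibly up to a factor of two in \(\chi\). In either form, one still obtains \(e^{-\Omega(m)}\) after adjusting constants.

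\emph{Step 3: linear scaling and conclusion.} It remains to show \(m=\Theta(n)\). The circuit is constant-depth and geometrically local, so the light-cone radius is \(O(1)\). One can therefore tile the lattice with kept blocks of constant size separated by constant-width buffers. This yields \(m\ge n/\kappa\) for a constant \(\kappa\) depending only on the depth, the locality, and the lattice dimension. Hence \(B(p_n,q_n)\le e^{-cn/(2\kappa)}=e^{-\Omega(n)}\). Proposition~\ref{prop:bc_range_certificate} then gives \(\Gamma^\star(p_n\mid q_n)\ge B^{-4}\ge e^{\Omega(n)}\).

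The common-support requirement of the proposition is handled by the infinitesimal full-support regularization mentioned earlier. If \(p_n\) and \(q_n\) do not share support, then \(\Gamma^\star=\infty\) anyway.

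\emph{Main obstacle.} The only genuinely non-formal point is the factorization claim \(\bar q_n=\prod_i q_i\). The marginal of a constant-depth local circuit on regions separated by more than twice the light-cone radius is a product distribution. I would argue this by the backward light-cone argument: each kept block's reduced density matrix depends only on gates acting within its own light cone. These light cones are disjoint and act on disjoint parts of the product input \(\ket{0}^{\otimes n}\), so the reduced state on the union of kept blocks is a tensor product. The computational-basis marginal then factorizes.
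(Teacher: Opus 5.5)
Your proof is correct and follows the paper's route: light-cone factorization of $\bar q_n$, the bound $B(\bar p_n,\bar q_n)\le e^{\chi m/2}\prod_i B(p_i,q_i)$ from near-tensorizability, monotonicity of the Bhattacharyya coefficient under buffer erasure (you prove it directly by Cauchy--Schwarz, where the paper invokes R\'enyi-$1/2$ data processing), and the certificate $\Gamma^\star\ge B^{-4}\ge e^{2cm}$, with $m=\Theta(n)$ best treated, as the paper does, as part of the block choice on which the hypotheses are imposed rather than something to derive. Your factor-of-two worry is unfounded: the $e^{\chi m/2}$ and the $\tfrac12$ in $B(p_i,q_i)=e^{-D_{1/2}(p_i\Vert q_i)/2}$ combine to give exactly $e^{-cm/2}$, which is the paper's constant.
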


\begin{figure}[!t]
    \centering
    \begin{minipage}[t]{0.48\columnwidth}
        \centering
        \includegraphics[width=\linewidth]{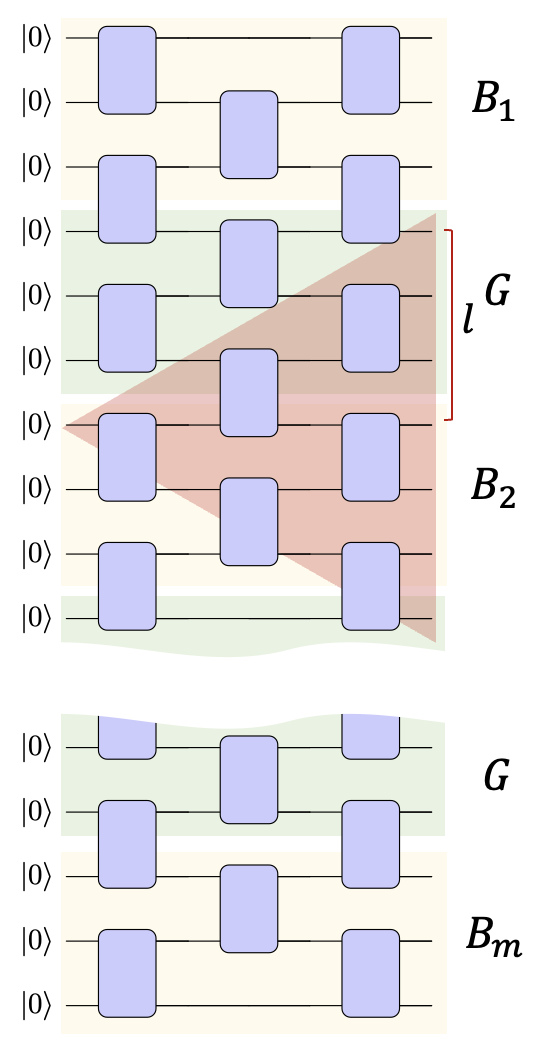}
    \end{minipage}\hfill
    \begin{minipage}[t]{0.48\columnwidth}
        \centering
        \includegraphics[width=\linewidth]{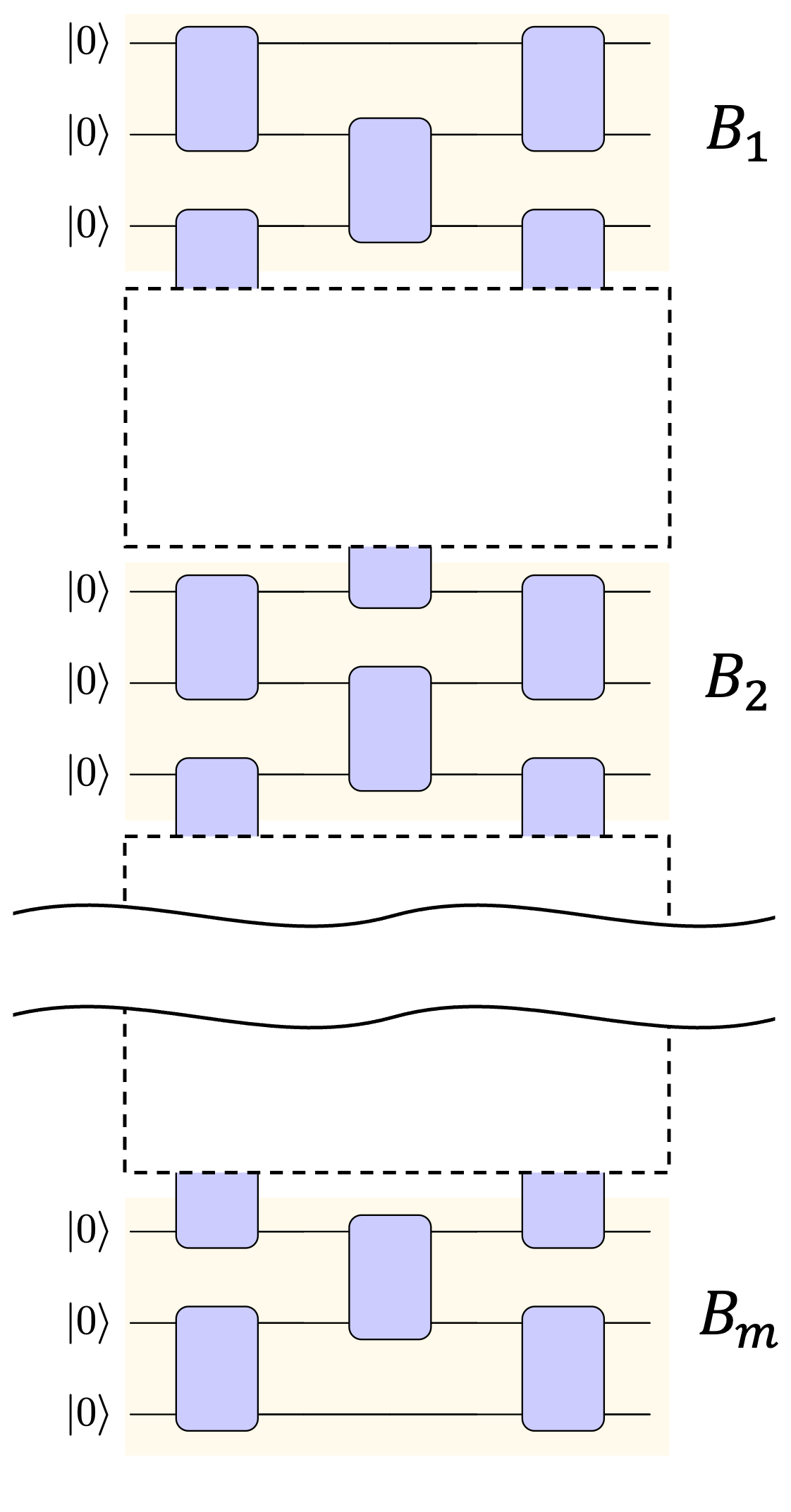}
    \end{minipage}

    \caption{\justifying
    Schematic illustration of the constant-depth Bhattacharyya-overlap obstruction. A geometrically local circuit has a finite backward light cone of width \(\ell=O(d)\), shown by the red wedges. (a) The system is divided into kept blocks \(B_i\), separated by buffer regions whose size exceeds the light-cone width. This separation prevents correlations generated by the constant-depth ansatz from coupling distinct kept blocks. (b) After discarding the buffer regions through a classical stochastic map, the ansatz Born distribution factorizes over the remaining kept blocks. If a positive density of local mismatch between the target and ansatz survives this coarse-graining, the R\'enyi-\(1/2\) divergence accumulates extensively, implying an exponentially small Bhattacharyya coefficient. The formal proof is given in Appendix.}
    \label{fig:finite_depth_lightcone}
\end{figure}

The proof of the theorem above can be found in the Appendix~\ref{app:dnp_range_obstruction}. The near-tensorizability condition means that correlations of the target state among the blocks are weaker than the local mismatch between the target and the ansatz. (We discuss the physical meaning of this assumption in detail in Appendix~\ref{app:dnp_range_obstruction}.) Combining the theorem with the Bhattacharyya certificate gives the central tradeoff. Bounding the DNP range can stabilize the finite-shot estimator, but it also confines the post-processed state to a narrow multiplicative neighborhood of the ansatz distribution. Increasing the range can restore exact representability, but only by introducing an exponentially large reweighting scale and, through the shot requirement of bounded DNP, an exponentially large measurement budget. Thus bounded DNP cannot generally overcome the expressibility bottleneck of a poor ansatz while retaining polynomial resources.

\section{Practical finite-shot behavior of bounded DNP}\label{sec:practical_scale}

Although the previous section shows that DNP-style algorithms face exponential resource requirements for exact matching in generic or structurally mismatched settings, they may still improve over standard VQE in finite-size practical regimes. In this section, we examine the training behavior of DNP-style algorithms via VQNHE, focusing on neural-network dynamics under practical resource scaling. By varying the number of measurement shots $N$, the dynamic range $r$ of the neural-network outputs, and the number of qubits $n$, we analyze how and why the DNP algorithm exhibits distinct behaviors across these regimes.

\begin{figure*}[t]
\centering

\begin{minipage}[t]{0.32\textwidth}
    \centering
    \includegraphics[width=\textwidth]{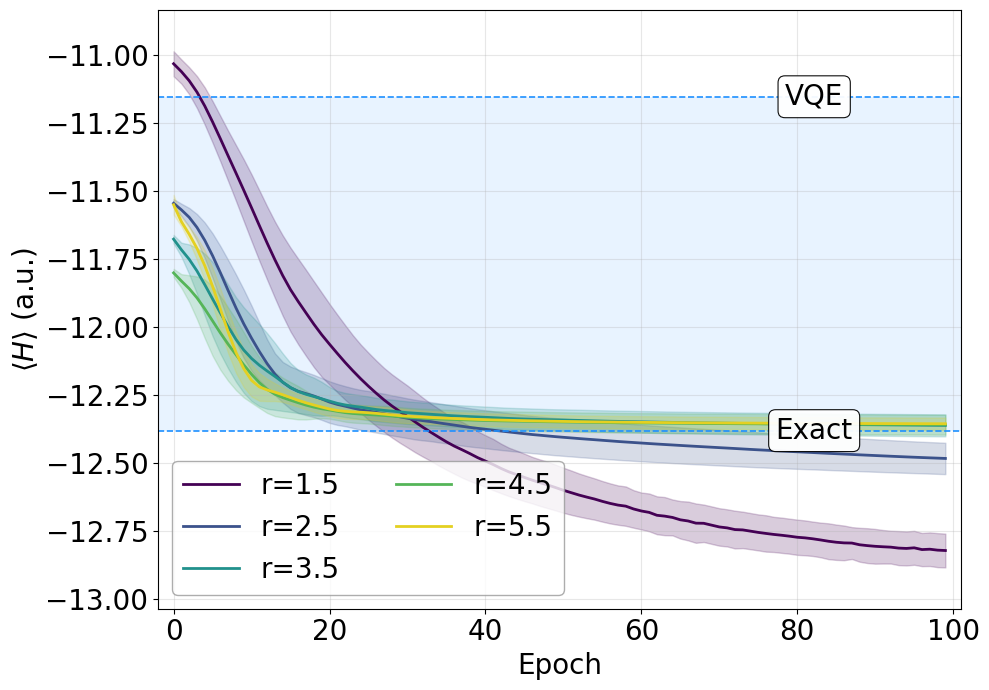}
    \textbf{(a)}
\end{minipage}
\hfill
\begin{minipage}[t]{0.32\textwidth}
    \centering
    \includegraphics[width=\textwidth]{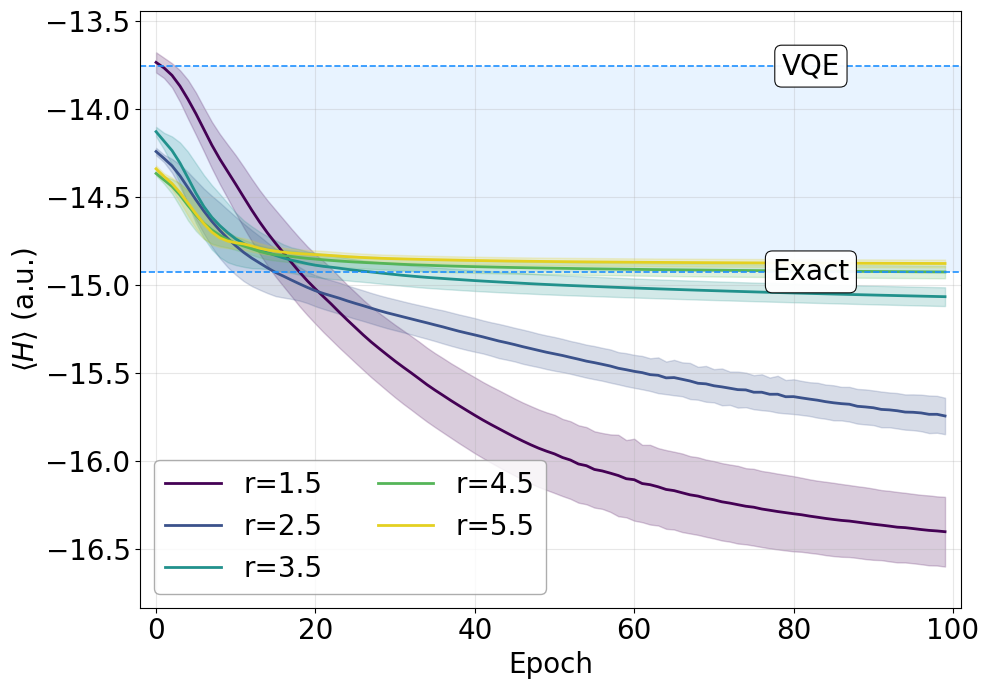}
    \textbf{(b)}
\end{minipage}
\hfill
\begin{minipage}[t]{0.32\textwidth}
    \centering
    \includegraphics[width=\textwidth]{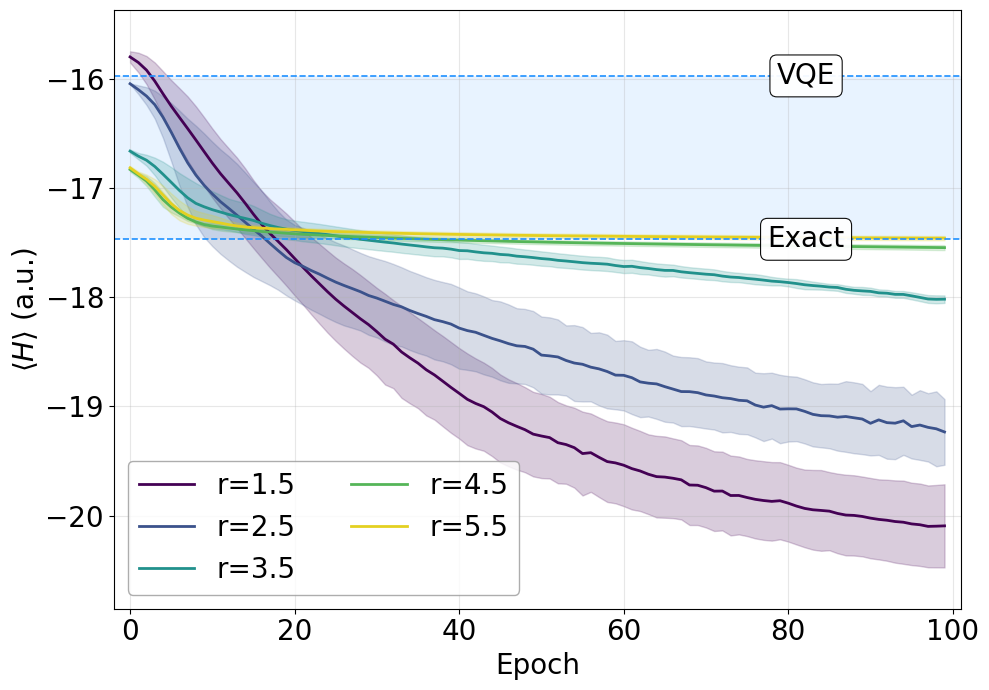}
    \textbf{(c)}
\end{minipage}

\vspace{5mm}

\begin{minipage}[t]{0.37\textwidth}
    \centering
    \includegraphics[width=\textwidth]{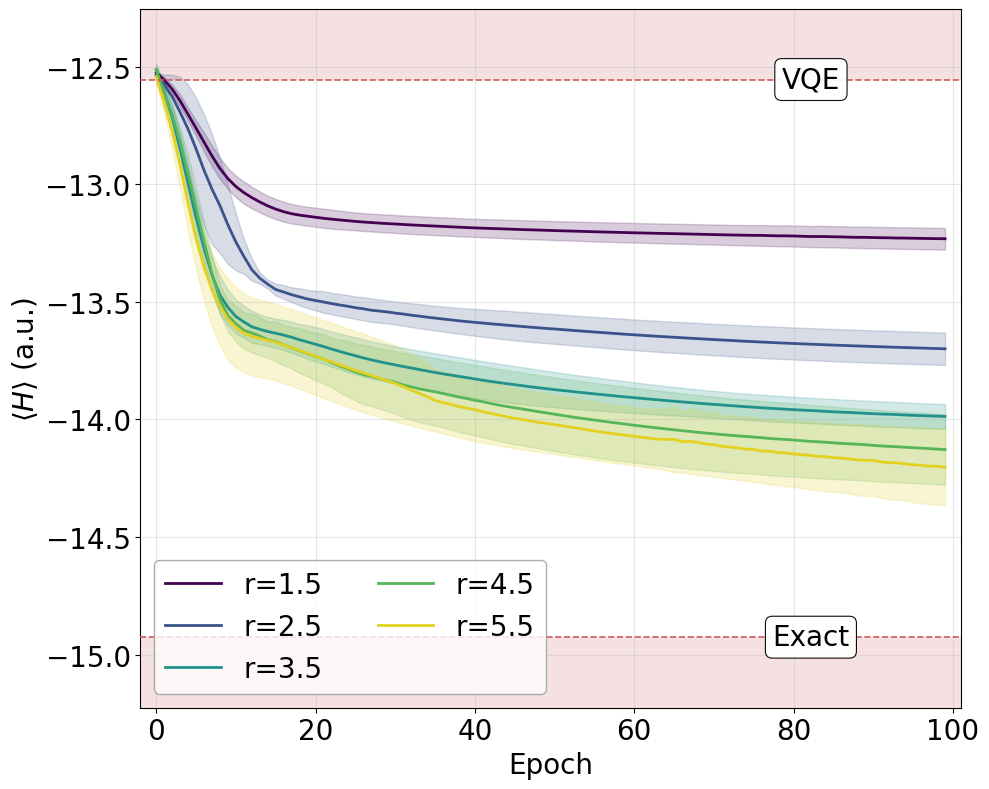}
    \textbf{(d)}
\end{minipage}
\hspace{5mm}
\begin{minipage}[t]{0.37\textwidth}
    \centering
    \includegraphics[width=\textwidth]{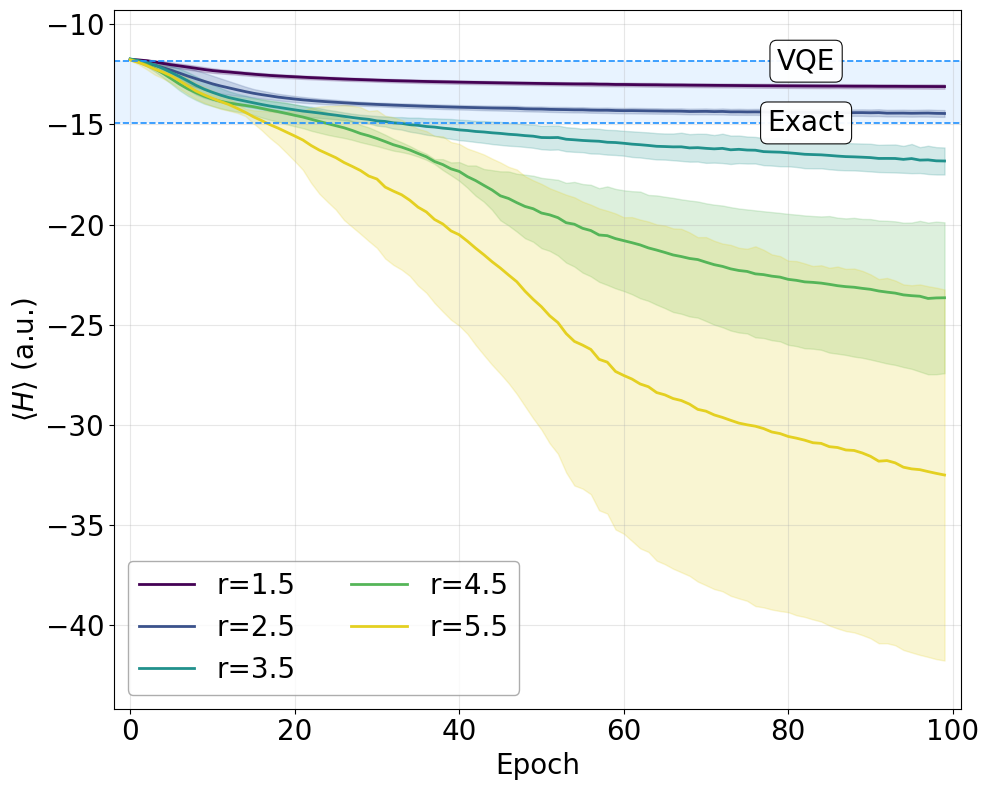}
    \textbf{(e)}
\end{minipage}

\caption{\justifying
Training curves of the neural network in VQNHE with constraints applied, for transverse-field Ising models of various sizes and shot numbers.
(a)--(c) show 10-, 12-, and 14-qubit systems, respectively, where the neural-network output is constrained to $f(s)\in[1/r, r]$ for $r$ values from $1.5$ to $5.5$.
For each $r$, the number of circuit shots is set to $N = \frac{9r^4}{4\epsilon^2}$~\cite{Zhang2022}, ensuring deviations smaller than $\epsilon=0.05$.
The shaded blue regions indicate the range between the VQE baseline and the exact ground-state energy.
(d)--(e) correspond to 12-qubit systems trained under fixed-shot conditions:
(d) $N=135{,}056$ shots per circuit, satisfying the accuracy bound at $r=3.5$,
and (e) $N=10{,}000$ shots per circuit.
In these plots, the shaded red or blue regions indicate the area between the VQE baseline and the exact energy.
}
\label{fig:DNP_constraints}
\end{figure*}

\paragraph{Expressiveness under sufficient shot budgets.}

Figure~\ref{fig:DNP_constraints}(a)--(c) illustrates the impact of neural-network constraints on the expressiveness of DNP algorithms. Most notably, it shows that across all system sizes, increasing $r$ (and the corresponding number of measurements~\eqref{eq:shot_lowerbound}) pushes the estimated energy closer to the exact ground-state energy. This trend can be interpreted in terms of both the measurement budget and the constraint parameter $r$.

Naturally, as $r$ increases---allowing the neural network to represent a broader range of output values---the network gains the ability to explore a wider solution space beyond what the trained ansatz alone can reach. Although achieving arbitrarily precise results would, as shown earlier, require an exponentially growing dynamic range with system size, in practice it is often most effective to choose $r$ as large as possible within the limits of available measurement resources.

The role of the dynamic range in the expressiveness of DNP can also be seen in a fixed-shot setting, as shown in Figure~\ref{fig:DNP_constraints}(d). Provided that sufficient measurements are taken to prevent the instabilities discussed below, increasing $r$ enhances expressiveness, allowing the estimator to approximate the exact ground-state energy more closely.

\paragraph{Sub-variational behavior under limited shots.}

Even with constrained DNP, however, one cannot prevent the estimate from falling below the exact ground-state energy. This phenomenon stems from the need for normalization and inevitable measurement error. Without prior knowledge of the ground state or its energy, the neural network is tasked solely with minimizing the loss function. However, due to the inherent mismatch between the sets of measured bit strings contributing to the numerator $\hat{N}_f$ and denominator $\hat{Z}_f$, the network can artificially lower the loss by assigning extremely large (small) values to bit strings that appear only in the numerator (denominator). Although such behavior does not guide the optimization toward the true ground state, it nonetheless satisfies the formal objective of minimizing the loss function. This manipulation effectively distorts the normalization: adjusting weights on these specific bit strings influences only one of the two estimators---either the numerator or the denominator---thereby breaking their balance.

Figure~\ref{fig:DNP_constraints}(e) illustrates how increasing $r$ can distort the normalization, leading to nonphysical deviations below the true ground-state energy. As discussed, this improper normalization arises when the neural network assigns extreme values to a subset of bit strings---an effect that becomes more pronounced as the dynamic range $r$ grows. In contrast to Figure~\ref{fig:DNP_constraints}(d), where larger $r$ values improve accuracy given sufficient measurement shots, increasing $r$ can deteriorate performance when normalization errors arise due to missing bit-string support in the measured bit strings. The divergence observed in Figure~\ref{fig:7site}(a) is an extreme manifestation of this behavior under unconstrained DNP. In summary, although a larger dynamic range enhances expressiveness and can, in principle, enable closer approximation to the exact ground state, limited measurement resources can cause DNP to produce nonphysical results, potentially leading to critical misinterpretations of the system's behavior.

\paragraph{System-size dependence.}

Figure~\ref{fig:final}(a) shows the deviation between the VQNHE results obtained with fixed computational resources and the exact ground-state energies of the TFIM. Since the dynamic range of the neural network is fixed, the expressiveness of DNP is likewise limited, and one might intuitively expect the deviation from the exact solution to grow with system size. However, as shown in the figure, the deviation becomes increasingly negative as the number of qubits grows, indicating not only a systematic deterioration in accuracy, but also that the solution is nonphysical. This worsening trend highlights the difficulty of scaling DNP to larger systems under fixed resource conditions, consistent with the exponential scaling required---either in the number of measurements or in the dynamic range---to maintain accuracy.

In conclusion, the DNP framework exhibits several noteworthy behaviors during neural-network training. First, when a sufficient number of measurement shots are available---though potentially beyond a practical low-shot regime---a wider output range of the neural network enables a more accurate approximation to the ground-state energy. In contrast, within the polynomial-shot regime, increasing the dynamic range can degrade performance, often producing nonphysical energies below the true ground-state value due to normalization issues. This deviation becomes progressively more severe as the number of qubits increases, reflecting the growing difficulty of maintaining stability under limited measurement resources.

\section{Stable training via unitary post-processing}\label{sec:uvqnhe}

\begin{figure*}[tb]
\centering

\begin{minipage}[t]{0.48\textwidth}
    \centering
    \includegraphics[width=\textwidth]
    {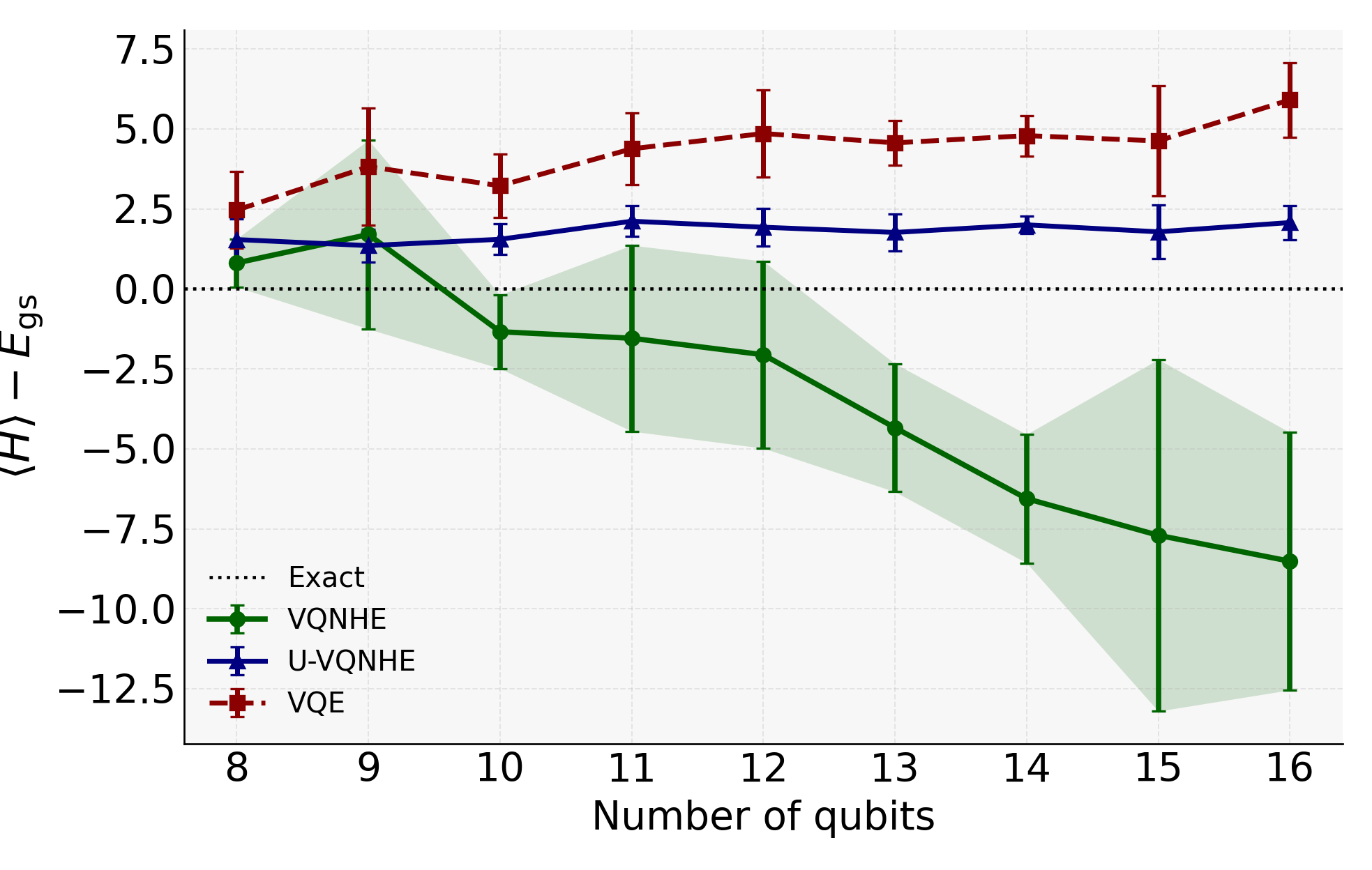}
    \textbf{(a)}
\end{minipage}
\begin{minipage}[t]{0.42\textwidth}
    \centering
    \includegraphics[width=\textwidth]{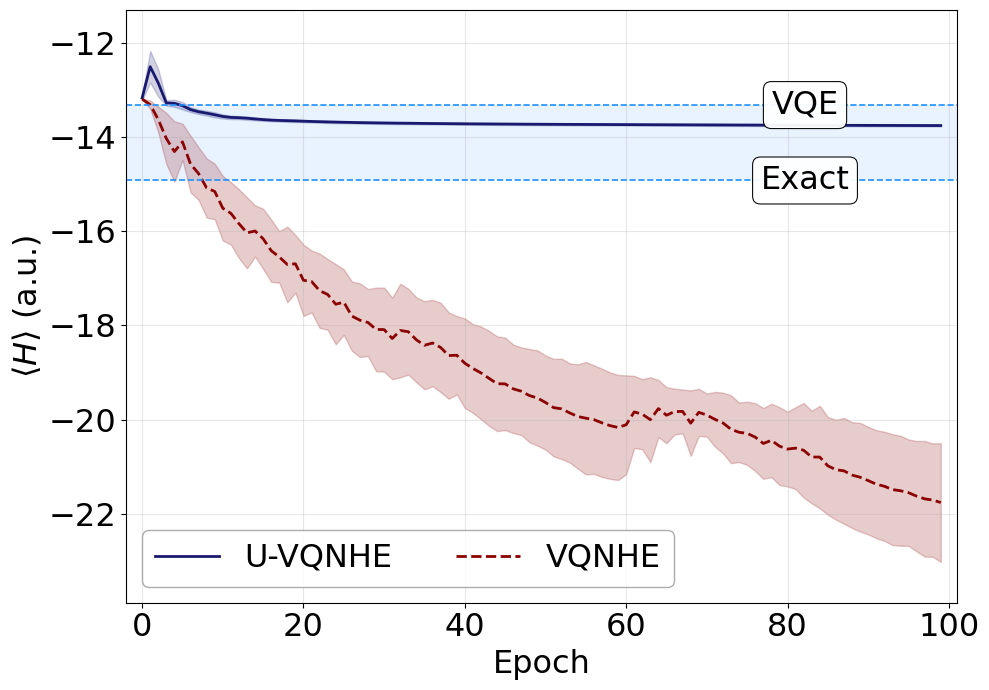}
    \textbf{(b)}
\end{minipage}

\caption{\justifying
Comparison of the training performance of VQNHE and U-VQNHE.
(a) Energy difference from the ground-state energy for TFIM instances with increasing numbers of qubits, using 10{,}000 measurements per circuit. The neural-network output is bounded within $[1/3,3]$, as before. The red dashed line shows the gap between VQE and the ground-state energy. As the number of qubits increases, the deviation between VQNHE and the exact ground-state energy generally worsens.
(b) Results for a 12-site TFIM using a two-layer ansatz and 10{,}000 measurements per circuit. The region between the exact ground-state energy and the VQE result is shown in light blue. The solid blue line shows the result from U-VQNHE, while the dotted red line shows that of VQNHE with constrained DNP (with range parameter $r=3$). VQNHE violates the variational bound, leading to nonphysical solutions. By contrast, U-VQNHE is generated from an exactly normalized post-processed state, so its exact objective remains variational, with deviations controlled by the concentration bound in Appendix~\ref{app:uvqnhe_finite_shot}.
}
\label{fig:final}
\end{figure*}

\begin{figure*}[tb]
\centering

\begin{minipage}[t]{0.32\textwidth}
    \centering
    \includegraphics[width=\textwidth]{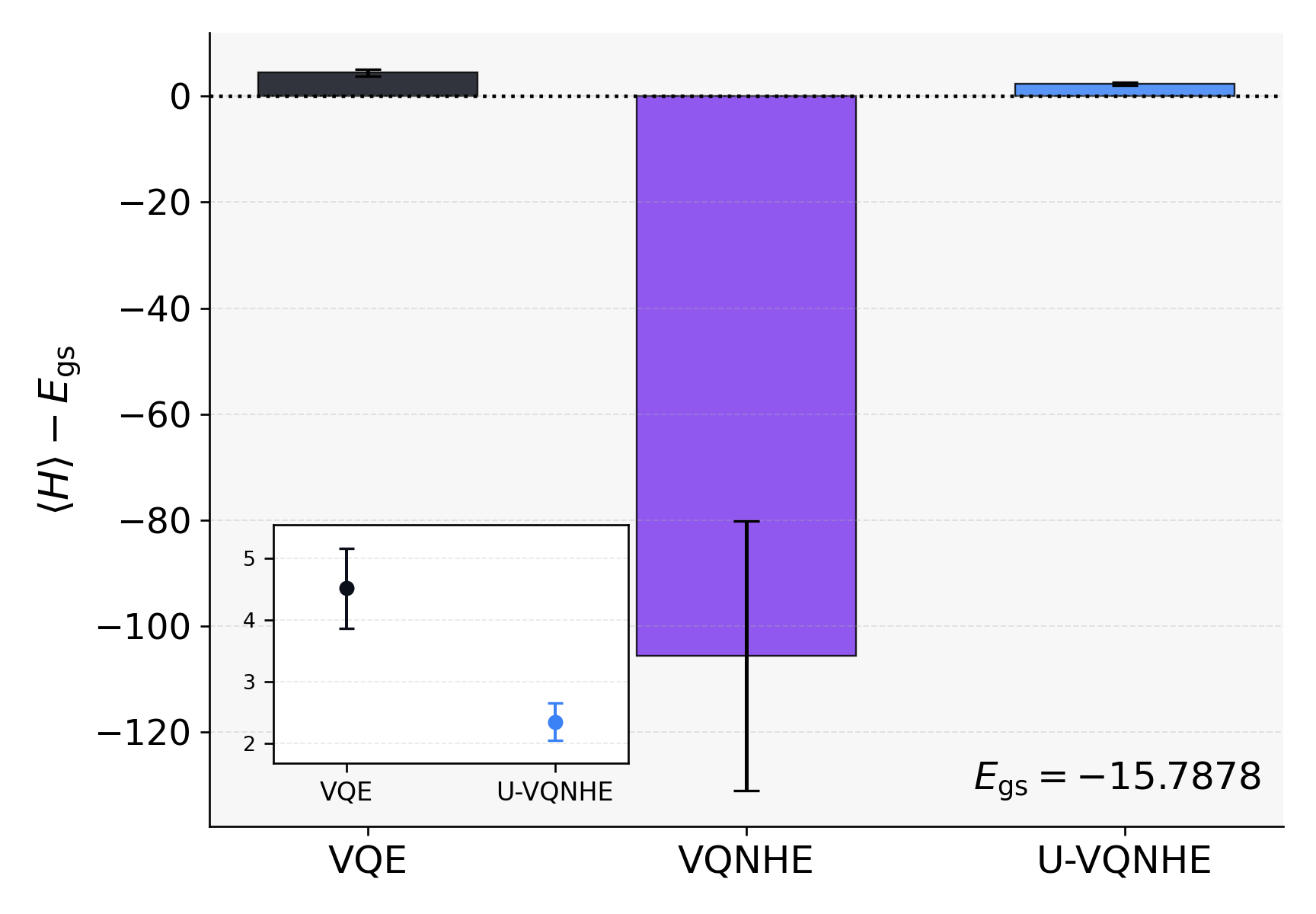}
    \textbf{(a)} Random Ising model
\end{minipage}
\hfill
\begin{minipage}[t]{0.32\textwidth}
    \centering
    \includegraphics[width=\textwidth]{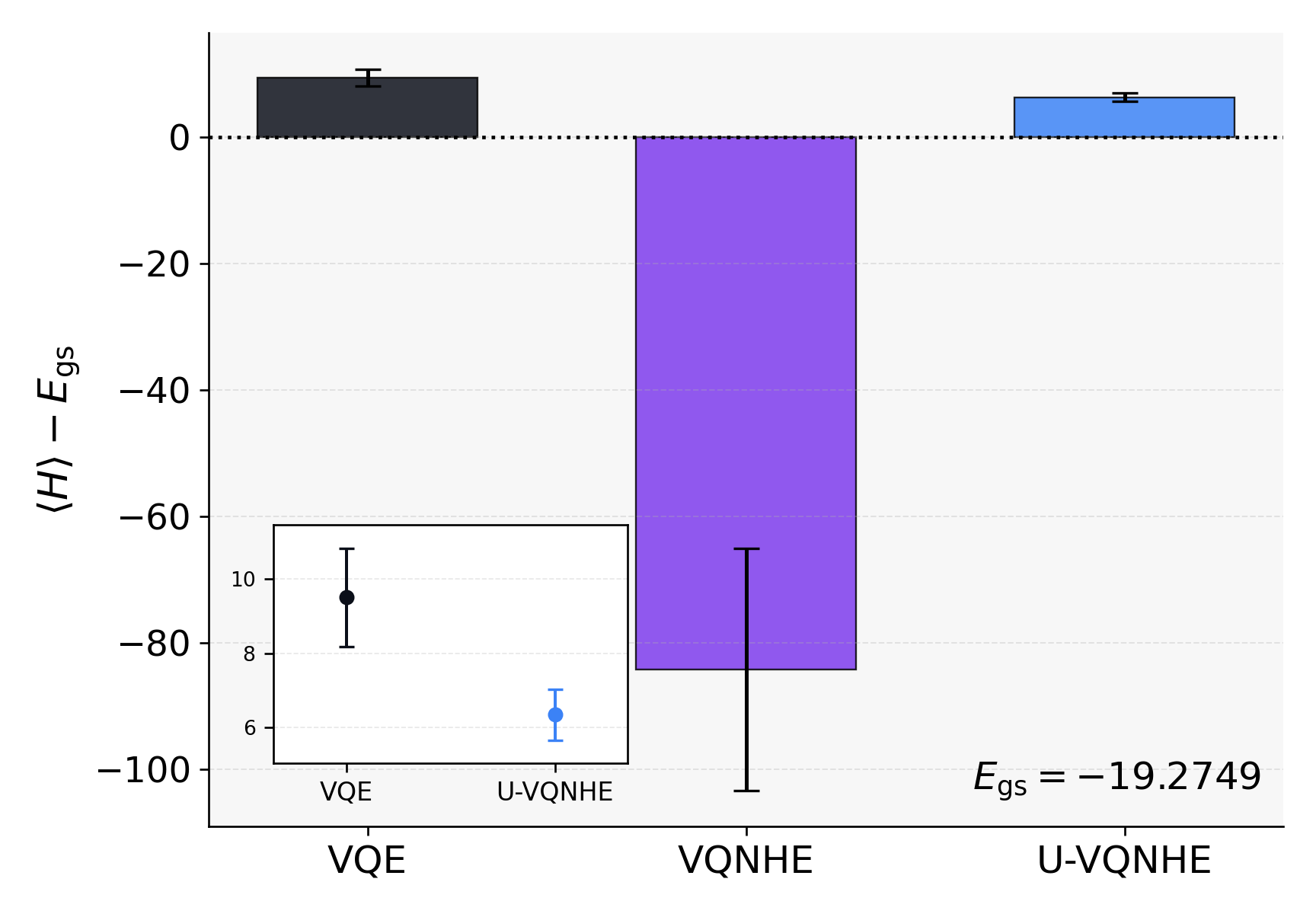}
    \textbf{(b)} Two-dimensional TFIM
\end{minipage}
\hfill
\begin{minipage}[t]{0.32\textwidth}
    \centering
    \includegraphics[width=\textwidth]{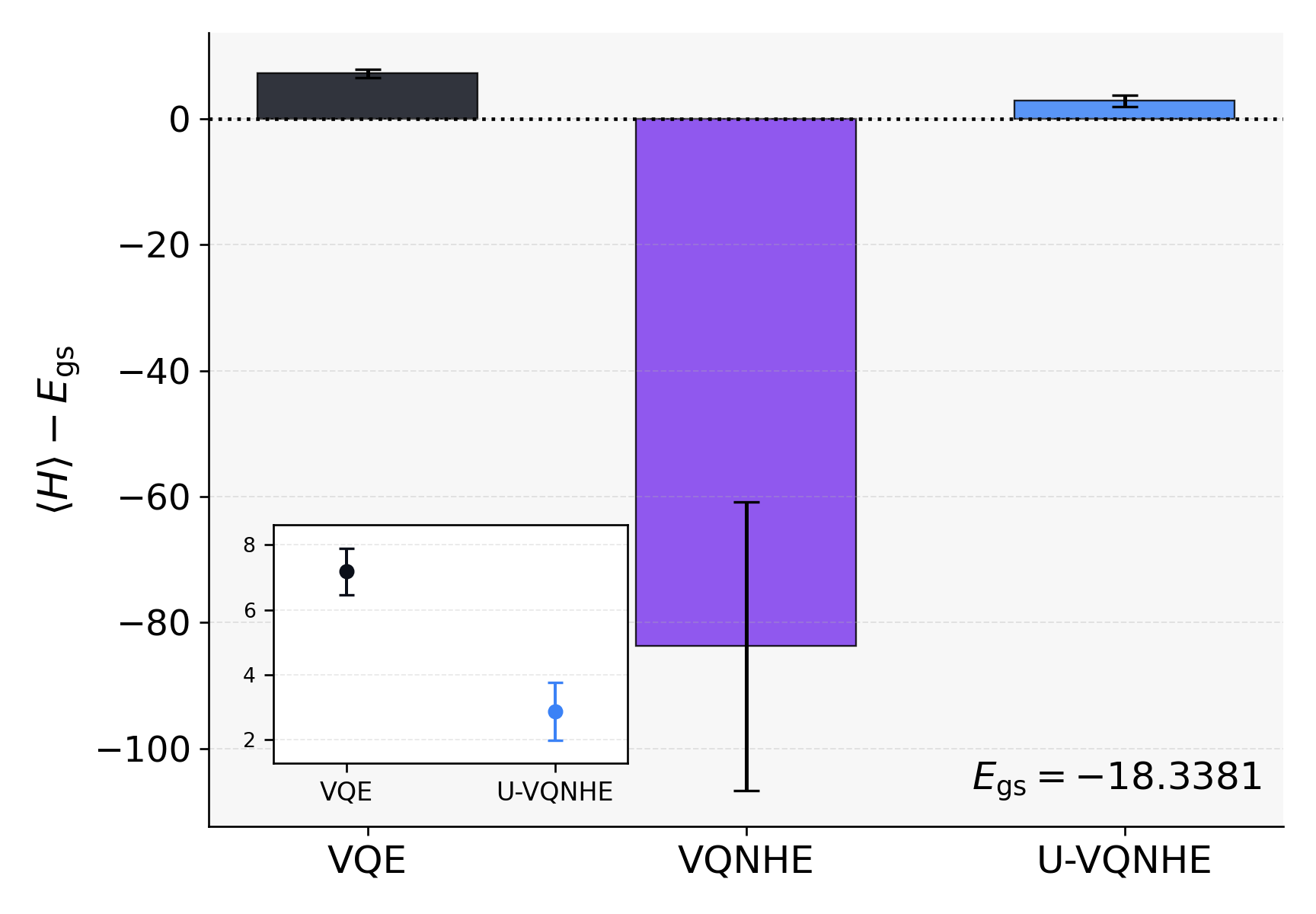}
    \textbf{(c)} Disordered two-dimensional XYZ spin model
\end{minipage}

\caption{\justifying
Energy-estimation performance of VQE, VQNHE, and U-VQNHE applied to different systems. Each panel reports the energy error $\langle H \rangle - E_{\mathrm{gs}}$, taking the exact ground-state energy $E_{\mathrm{gs}}$ as the zero reference. Lower absolute error indicates a more accurate ground-state estimate. Markers denote the mean over five independent repetitions and error bars indicate one standard deviation. The inset provides a magnified comparison of the VQE and U-VQNHE energy errors relative to the exact ground-state energy. All trials used 1024 quantum measurements per circuit and 1000 neural-network training epochs. (a) A 12-qubit one-dimensional random-coefficient Ising Hamiltonian with nearest-neighbor $ZZ$ couplings and random transverse and longitudinal fields. (b) A 12-qubit two-dimensional transverse-field Ising model, which increases connectivity and introduces a less one-dimensional correlation structure. (c) Disordered XYZ spin model on a \(3\times3\) square lattice with open boundary conditions, random anisotropic nearest-neighbor couplings, and random local fields along all three spin directions. All results use finite-shot circuit sampling, and energies are reported relative to the exact ground-state energy obtained by sparse diagonalization.
}
\label{fig:hamiltonian_comparison}
\end{figure*}

Sections~\ref{sec:requirements}--\ref{sec:exp_resource_dnp} show that the explicit normalization intrinsic to DNP is not a benign implementation detail: under finite sampling, normalization turns the DNP objective into a ratio estimator whose value can be dominated by rare configurations. As a consequence, the polynomial measurement budgets can lead to support mismatch between numerator and denominator estimators (Sec.~\ref{sec:exp_resource_dnp}), and the empirical loss landscape may develop nonphysical directions that violate the Rayleigh--Ritz bound. This motivates a simple design principle for quantum-neural hybridization to preserve the normalization by construction. If the post-processing layer is norm-preserving, then (in the absence of additional noise) its objective is automatically the expectation value of $H$ on a valid quantum state, and variational consistency follows immediately.

We implement this principle by replacing the diagonal non-unitary filter $D_f$ with a diagonal unitary transformation. A diagonal map is unitary if and only if each diagonal entry has unit modulus. Hence, instead of learning amplitudes $f(s)\in\mathbb{R}$ as in DNP, we let the neural network output a real-valued phase function $g_\phi(s)\in\mathbb{R}$ and define the diagonal unitary
\begin{equation}
    {
    \ket{\psi_u(\boldsymbol{\theta})}
    = U_g\,\ket{\psi(\boldsymbol{\theta})},
    \quad
    U_g = \sum_{s\in\{0,1\}^{n}} e^{i g_\phi(s)} \ket{s}\!\bra{s}.
    }
    \label{unitarytransformation}
\end{equation}
This phase-only post-processing is compatible with the complex-valued formulation discussed in the supplementary material of Ref.~\cite{Zhang2022}, but crucially it eliminates the need for explicit (and sampling-limited) renormalization.

\begin{theorem}[Variational safety of diagonal unitary post-processing]
\label{thm:unitary_safety}
Let $U_g = \sum_{x\in\Omega} e^{i g(x)}\ket{x}\bra{x}$ be a diagonal unitary defined by an arbitrary real-valued function $g:\Omega \to \mathbb{R}$. For any normalized ansatz state $\ket{\psi(\boldsymbol{\theta})}$, the post-processed state $\ket{\psi_u(\boldsymbol{\theta})}=U_g\ket{\psi(\boldsymbol{\theta})}$ is normalized and satisfies the Rayleigh--Ritz bound
\begin{equation}
    \bra{\psi_u(\boldsymbol{\theta})} H \ket{\psi_u(\boldsymbol{\theta})}\ge E_{\mathrm{gs}}.
\end{equation}
In particular, any exact evaluation of the U-VQNHE objective is variationally consistent by construction.
\end{theorem}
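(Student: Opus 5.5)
The plan is to establish two facts in sequence: first that $U_g$ is unitary, so it preserves the norm of the ansatz state, and then that the Rayleigh--Ritz bound holds for any normalized state.

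\emph{Step 1: unitarity.} I would compute $U_g^\dagger U_g$ directly from the spectral form. Since $\{\ket{x}\}_{x\in\Omega}$ is an orthonormal basis and $g(x)\in\mathbb{R}$,
\begin{equation}
U_g^\dagger U_g=\sum_{x,y\in\Omega} e^{-ig(x)}e^{ig(y)}\ket{x}\braket{x|y}\bra{y}=\sum_{x\in\Omega}\lvert e^{ig(x)}\rvert^2\ket{x}\bra{x}=I .
\end{equation}
The analogous computation gives $U_gU_g^\dagger=I$. Hence $\braket{\psi_u(\boldsymbol{\theta})|\psi_u(\boldsymbol{\theta})}=\braket{\psi(\boldsymbol{\theta})|\psi(\boldsymbol{\theta})}=1$. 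In words, the post-processed state is normalized with no denominator, in contrast with $Z_f$ in Eq.~\eqref{eq:estimators}.

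There is one point of care: $\Omega$ must index an orthonormal basis of the register on which $H$ acts. If $\Omega$ is extended with ancilla bits, as in Appendix~\ref{app:extensions}, I would treat $U_g$ as acting on the enlarged space and $H$ as $H\otimes I$. The ground energy is unchanged by this embedding, so the argument carries over.

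\emph{Step 2: Rayleigh--Ritz.} $H=\sum_P c_PP$ is Hermitian with real $c_P$, so I would diagonalize it as $H=\sum_k E_k\ket{e_k}\bra{e_k}$ with $E_k\ge E_{\mathrm{gs}}$. Writing $\ket{\psi_u}=\sum_k a_k\ket{e_k}$ with $\sum_k|a_k|^2=1$ (by Step 1) gives
\begin{equation}
\bra{\psi_u}H\ket{\psi_u}=\sum_k E_k|a_k|^2\ge E_{\mathrm{gs}}\sum_k|a_k|^2=E_{\mathrm{gs}} .
\end{equation}
The "in particular" clause then follows because the exact U-VQNHE objective is by definition $\bra{\psi(\boldsymbol{\theta})}U_g^\dagger HU_g\ket{\psi(\boldsymbol{\theta})}$. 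This is a genuine expectation value on a normalized state, so it cannot undershoot $E_{\mathrm{gs}}$ for any $\boldsymbol{\theta}$ or $g$.

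No step is a real obstacle. The only subtlety is conceptual: the guarantee concerns the exact objective, whereas finite-shot estimates can still fluctuate below $E_{\mathrm{gs}}$. I would say explicitly that this is deferred to the concentration analysis in Appendix~\ref{app:uvqnhe_finite_shot}.
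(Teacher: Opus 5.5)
Your proposal is correct and follows the same route as the paper: unitarity of $U_g$ preserves the norm, and the Rayleigh--Ritz principle then bounds the energy of any normalized state. Your explicit computation of $U_g^\dagger U_g$, the spectral-decomposition proof of the bound, and the remark on embedding $H$ as $H\otimes I$ for an extended $\Omega$ simply fill in details the paper leaves implicit.
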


\begin{proof}---The unitarity of $U_g$ implies $\norm{\psi_u(\boldsymbol{\theta})}=\norm{\psi(\boldsymbol{\theta})}=1$. The Rayleigh--Ritz principle then gives $\bra{\psi_u}H\ket{\psi_u}\ge E_{\mathrm{gs}}$ for any normalized state $\ket{\psi_u}$.
\end{proof}

Operationally, U-VQNHE inherits the circuit-transformation estimator of VQNHE: the diagonal unitary $U_g$ is applied only virtually through classical post-processing of measurement outcomes. Because the learned phase factors introduce complex interference, estimating $\bra{\psi_u}P\ket{\psi_u}$ requires access to both real and imaginary parts; this can be achieved by at most doubling the set of measurement circuits used in VQNHE. Importantly, the estimator remains linear in shot-averaged quantities and does not involve an explicit normalization ratio. Derived directly from VQNHE, the expectation value from this transformed state is evaluated as
\begin{align}
&\langle \hat{H} \rangle_g \nonumber \\
&= \sum_P c_P \big[ \sum_{s \in B_{m,P}}  (-1)^{s^\star} \operatorname{Re} \left( e^{-ig(s')} e^{ig(\tilde{s'}_P)} \right) p_m(s;P) \nonumber \\
&+ \sum_{s \in B_{m',P}} (-1)^{s^\star} \operatorname{Im} \left( e^{-ig(s')} e^{ig(\tilde{s'}_P)} \right) p_{m'}(s;P) \big].
\label{uvqnhe}
\end{align}
Here, $p_m(s;P)$ is the probability distribution of the measurement circuits of VQNHE, and $p_{m'}(s;P)$ is that of the measurement circuits corresponding to the imaginary parts. The total number of measurement circuits that must be evaluated is at most doubled, but none of them requires an exponential number of shots, nor does the algorithm require explicit normalization.

The finite-shot stability of this estimator is analyzed in Appendix~\ref{app:uvqnhe_finite_shot}, where the corresponding variance and concentration propositions are given. In particular, when each measurement component is estimated with \(N\) shots, the finite-shot estimate \(\widehat E_g\) concentrates around the exact U-VQNHE energy \(E_g\) with the usual \(N^{-1/2}\) sampling rate and with a prefactor controlled only by the Hamiltonian coefficient norm \(\|c\|_2\). This marks the difference from typical DNP implementations: since \(U_g\) is unitary, the exact energy \(E_g\) remains variational, satisfying \(E_g\ge E_{\mathrm{gs}}\). The finite-shot estimate \(\widehat E_g\), however, is an ordinary statistical estimator and may fluctuate above or below \(E_g\), and therefore may also fall below \(E_{\mathrm{gs}}\) in a particular finite-shot realization. What U-VQNHE provides is not an unconditional guarantee on the raw sampled value, but a high-probability certificate
\begin{equation}
E_{\mathrm{gs}}
\le
E_g
\le
\widehat E_g
+
2\|c\|_2
\sqrt{\frac{\log(2/\delta)}{N}},
\end{equation}
which is independent of any neural-network amplitude range or stochastic normalization denominator.

Figure~\ref{fig:final} compares U-VQNHE with the constrained non-unitary post-processing baseline. The results illustrate the main advantage of U-VQNHE: the exact objective remains within the Rayleigh--Ritz variational framework because the post-processing map is unitary. Thus the transformed state \(\ket{\psi_u}\) is always a normalized physical state, and its exact energy satisfies \(E_u\ge E_{\mathrm{gs}}\). A finite-shot estimate of this energy can still fluctuate statistically, but the deviation from \(E_u\) is controlled by the range-independent concentration bound in Appendix~\ref{app:uvqnhe_finite_shot}. In contrast, constrained VQNHE can reduce finite-shot instabilities but still relies on non-unitary amplitude reweighting and a normalization ratio, so its finite-shot behavior does not admit the same range-independent certificate.

Figure~\ref{fig:hamiltonian_comparison} compares VQE, VQNHE, and U-VQNHE on more complex benchmark Hamiltonians. In all three cases, U-VQNHE improves the energy estimate relative to vanilla VQE while its exact objective remains protected by the Rayleigh--Ritz variational bound. By contrast, VQNHE produces energies far below the exact ground-state energy, reflecting the loss of variational control caused by its non-unitary normalization. Details of the Hamiltonians and numerical settings are provided in Appendix~\ref{app:benchmark-details}.

\section{Discussion and Outlook}\label{sec:discussion}

The central message of this work is that quantum--neural eigensolvers should be designed by considering quantum state preparation, classical representation learning, and measurement statistics together. Our framework connects these ingredients through self-contained training, polynomial resource scaling, and variational consistency, and provides both analytical constraints and a constructive architecture. The analysis of amplitude reweighting shows how the learned transformation determines the statistical structure of the objective: explicit normalization introduces a ratio estimator that can be dominated by sparsely sampled configurations. We identify two complementary constraints: (i) sampled-support mismatch can make unconstrained empirical objectives unbounded, with an exponential support-coverage cost in the uniform-sampling setting (Sec.~\ref{sec:exp_resource_dnp}), and (ii) exact target-distribution matching can demand exponentially large reweighting range for Haar-random targets or structured ansatz--target pairs satisfying the stated near-tensorizability and mismatch assumptions (Theorems~\ref{thm:haar_range_noncoverage} and~\ref{thm:near_tensorizable_bc}). Together with the finite-shot experiments in Sec.~\ref{sec:practical_scale}, these results establish why expressive capacity and statistical stability must be assessed jointly.

From a physics perspective, these results give variational consistency an operational meaning by separating the exact objective from its sampled estimate. For a normalized state $\ket{\psi}$, the exact energy $\bra{\psi}H\ket{\psi}$ satisfies the Rayleigh--Ritz bound; finite-shot estimates can fluctuate around this value even in standard VQE. Neural amplitude reweighting introduces an additional mechanism: optimizing an empirical normalization ratio can amplify sampling errors into spurious directions of decreasing loss. Norm preservation removes this mechanism at the level of architecture. In U-VQNHE, every choice of neural parameters defines a normalized state, and the bounded measurement contributions yield explicit concentration guarantees for independent evaluations at fixed parameters (Appendix~\ref{app:uvqnhe_finite_shot}). This connects a physical variational objective to a statistically controlled energy estimate.

U-VQNHE implements this design principle through a complementary allocation of quantum and classical roles. The circuit determines the computational-basis Born probabilities and initial phases, while the neural network learns relative-phase corrections that modify off-diagonal energy contributions through interference. This diagonal unitary preserves normalization exactly (Theorem~\ref{thm:unitary_safety}) and can be evaluated through classical processing of measurement records. Its expressivity is therefore specifically phase-based: it does not change the Born probabilities or reproduce amplitude filters. The significance of the construction lies in combining this phase-learning capacity with a variational objective and an estimator free of stochastic normalization. Our Ising and disordered XYZ benchmarks show that this combination can improve energy accuracy over the underlying VQE under finite measurement budgets, with greater robustness than the amplitude-reweighting baselines studied here.

Several directions follow from this integrated perspective. First, norm preservation provides an organizing principle for broader quantum--neural architectures, including learned shallow unitary transformations and trace-preserving quantum channels whose observables remain efficiently estimable. Second, the joint analysis of expressivity and sampling cost can guide the design of learning interfaces that either avoid explicit normalization or control it through provable concentration guarantees. Finally, U-VQNHE can be combined with adaptive ansatz growth, symmetry constraints, or subspace-expansion techniques to study complementary benefits in larger and more structured Hamiltonians. Extending these ideas requires understanding how representation, optimization, and independent energy validation interact as system size grows. The resulting theory and architecture provide a foundation for pursuing these questions while keeping physical consistency and measurement resources explicit throughout quantum--neural design.

\section{Acknowledgments}

This work has been supported by the National Research Foundation of Korea (NRF) grant (No. RS-2024-00442855, No. RS-2024-00413957, No. RS-2024-00432214, and No. RS-2025-18362970), and the Institute of Information \& Communications Technology Planning \& Evaluation (IITP) grant (No. RS-2022-II221040), all of which are funded by the Korean government (MSIT). We also thank Luning Zhao for mentoring and helpful discussions.

\section{Author Contributions}

M.K. conceived the project, identified the issues, proposed the algorithm, developed the mathematical framework, and wrote the manuscript. K.P. and K.L. provided guidance on the project and feedback on the manuscript. The simulation code was implemented by M.K.. J.B. and T.K. supervised the project, including research planning and manuscript preparation.

\section{Code Availability}

Python scripts that generate all figures and execute the algorithms are archived at Github,
\url{https://github.com/miinukim/vqnhelib}.

\appendix

\section{Variational Quantum-Neural Hybrid Eigensolver}
\label{app:vqnhe}

VQNHE~\cite{Zhang2022} augments a standard VQE by inserting a diagonal non-unitary post-processing
\begin{equation}
D_f \;=\; \sum_{s\in\{0,1\}^n} f(s)\,\ket{s}\!\bra{s},
\end{equation}
that reweights measurement outcomes of a variational state $\ket{\psi}$. This is a standard setting of DNP where we have $\Omega = \{0,1\}^n$ and $x=s$. The reweighting converts Born probabilities $p_\psi(s)=|\langle s|\psi\rangle|^2$ into $p_f(s)\propto |f(s)|^2 p_\psi(s)$, increasing the effective expressiveness of a shallow ansatz without changing its circuit structure.

Let $\hat H=\sum_P c_P\,P$ with Pauli strings $P$. For each $P$, choose a \emph{star qubit} $q_\star$ within the support of the $X/Y$ factors of $P$. For every other qubit $j$ in the $X/Y$ support of $P$ with $j\neq q_\star$, apply a controlled gate onto $q_\star$ with control $j$: use controlled-$X$ if $P$ has $X$ on $j$ and controlled-$Y$ if $P$ has $Y$ on $j$. This aggregates the $X/Y$-parity of $P$ onto the star qubit. After this aggregation, perform the standard computational-basis diagonalization for $P$: apply $H$ on qubits where $P$ has $X$, $R_X(-\pi/2)$ on qubits where $P$ has $Y$, and measure directly where $P$ has $Z$ or $I$. A final computational-basis readout then yields bit strings $s=(s_\star,\bar{s})$, and the eigenvalue of $P$ is encoded in the star-qubit outcome via $\sigma_P(s)=(-1)^{s_\star}$.

Let \(f:\{0,1\}^n\to\mathbb{R}\) define \(D_f\) as above. We restrict
this derivation to real-valued post-processing, while the detailed construction can be found in~\cite{Zhang2022}. Denote by $p_P(s)$ the shot-normalized probability of $s$ obtained from the circuit used to measure $P$, and by $p_I(s)=p_\psi(s)$ the probability from the bare ansatz circuit, or equivalently from the Pauli term $I$. The chosen diagonalization of $P$ induces a deterministic map $s\mapsto \tilde s_P$ on data bit strings such that it exerts XOR operations on the support of $X$ and $Y$. For example, if $P = XYZ, s = 011$, then $\tilde{s}_P = 101$. Also define a filter $s\mapsto s'$ that sets the star-qubit bit of $s$ to $0$ in $s'$ while leaving all other bits unchanged, or in other words, $s=(s_\star,\bar{s}) \mapsto s' = (0, \bar{s})$.

With these definitions, the energy estimator reads
\begin{equation}
\label{eq:vqnhe-energy}
\big\langle \hat H \big\rangle_f
\;=\;
\frac{\displaystyle \sum_{P} c_P \sum_{s=(s_\star,\bar{s})} f(s')\,f(\tilde{s'}_P)\,\sigma_P(s)\,p_P(s)}
{\displaystyle \sum_{s} |f(s)|^2\,p_I(s)}\,.
\end{equation}
In practice, the sums are implemented as Monte Carlo averages over the corresponding shot ensembles: the numerator averages $c_P\,f(s'_i)\,f(\tilde s'_{P,i})\,\sigma_P(s_i)$ over samples $s_i$ drawn from the $P$-measurement circuits, while the denominator averages $|f(t_j)|^2$ over samples $t_j$ drawn from the ansatz circuit itself, or with respect to the identity term. Equation~\eqref{eq:vqnhe-energy} makes explicit that the quantum hardware supplies unweighted bit strings and all diagonal reweighting is performed in classical post-processing via $D_f$.

\section{Neural post processing algorithms from the DNP framework}
\label{app:extensions}
In this appendix we detail how several descendants of the variational quantum--neural family operate in practice and how each can be written as an expansion of the DNP map
\[
\mathcal{D}_f(\rho)= \frac{D_f \rho D_f^\dagger}{\Tr[D_f \rho D_f^\dagger]},
\qquad
D_f=\sum_{x\in\Omega} f(x)\ket{x}\!\bra{x},
\]
where $x$ denotes the measurement record used in post-processing and $\Omega$ is its domain. Throughout, the underlying variational state is $\ket{\psi(\boldsymbol{\theta})}=U(\boldsymbol{\theta})\ket{0}^{\otimes n}$ (or, when additional measured registers are present, the corresponding state on the enlarged register), and the post-processed state is
\begin{equation*}
\ket{\tilde{\psi}(\boldsymbol{\theta})}_f
=
\frac{D_f\ket{\psi(\boldsymbol{\theta})}}
{\|D_f\ket{\psi(\boldsymbol{\theta})}\|},
\end{equation*}
with
\begin{equation*}
\|D_f\ket{\psi(\boldsymbol{\theta})}\|
=
\sqrt{\bra{\psi(\boldsymbol{\theta})}D_f^\dagger D_f\ket{\psi(\boldsymbol{\theta})}}.
\end{equation*}

\subsection{Variational post-selection}

Variational post-selection~\cite{Zhang2025} augments a variational state with one or more ancilla qubits. After state preparation, measurement outcomes are \emph{filtered} before forming expectation values, thereby biasing the effective sample distribution toward a target subspace or target statistics while preserving a normalized estimator. In the ground-state setting this filtering is implemented by projecting onto a designated set of accepted ancilla outcomes, while in the thermal-state setting it is implemented via a graded acceptance profile $g\in[0,1]$, i.e., a neural reweighting of ancilla outcomes prior to normalization.

At the density-matrix level, let the joint system--ancilla state be
$|\Psi\rangle=\sum_{i,m}\psi_{im}\,|i\rangle_{\mathrm{sys}}|m\rangle_{\mathrm{anc}}$.
Reweighting the ancilla outcome $m$ by $g(m)$ and tracing out the ancilla yields the reduced system state
\begin{equation}
\rho^{(\mathrm{sys})}_{ij}(g)\ \propto\ \sum_{m}\psi_{im}\,g(m)\,\psi^{*}_{jm}\,,
\label{eq:soft-post-rho}
\end{equation}
which specializes to projector-based post-selection when $g$ is an indicator on the accepted ancilla set.

From the DNP viewpoint, this is a diagonal, non-unitary post-processing on the extended measurement record $x=(s,m)\in\Omega$, with $\Omega=\{0,1\}^n\times\mathcal{M}$, implemented by
\begin{align}
&D_{f_{\mathrm{ext}}}
=\sum_{s,m} f_{\mathrm{ext}}(s,m)\,|s,m\rangle\!\langle s,m|,\\
\nonumber &f_{\mathrm{ext}}(s,m)=\sqrt{g(m)}\,f(s),
\end{align}
followed by the usual DNP renormalization. In other words, variational post-selection remains within DNP by enlarging the outcome domain and absorbing the post-selection (or neural acceptance) into the diagonal weights.

\subsection{pUNN}

Relative to standard DNP, which applies a diagonal weight $f(s)$ to single-register bit strings, paired unitary coupled-cluster with neural networks~(pUNN) enlarges the space on which the diagonal weights act~\cite{Li2025}. In the DNP definition, this corresponds to replacing the outcome domain $\Omega=\{0,1\}^n$ by a product domain $\Omega=\mathcal{K}\times\mathcal{J}$ and applying a diagonal, non-unitary weight function $F(k,j)$ on the joint measurement record $x=(k,j)$. Concretely, pUNN pairs the paired unitary coupled-cluster with double excitations ansatz~\cite{Henderson2015} with an auxiliary circuit of the same size where classical perturbation is given before it is entangled with the system, and lets the diagonal weight depend on both indices, $F:\mathcal{K}\times\mathcal{J}\to\mathbb{R}$ (or $\mathbb{C}$). This lift gives the model enough capacity to encode correlations that a single-index $f(s)$ cannot.

Under this viewpoint, pUNN is simply DNP on the enlarged register, with diagonal operator
\[
D_F=\sum_{(k,j)\in\mathcal{K}\times\mathcal{J}} F(k,j)\,|k,j\rangle\!\langle k,j|,
\]
and the energy is then computed using the same normalized estimator structure as DNP. Operationally, the auxiliary tag $j$ gives the neural post-processing an extra conditioning variable. This allows pUNN to represent reweightings that cannot be captured by a function of the hardware outcome $k$ alone---intuitively, it can learn different reweighting rules for the same $k$ depending on the accompanying tag $j$, thereby encoding additional correlations.

If one averages over the auxiliary tag, the scheme collapses back to an effective single-variable reweighting on $k$, recovering standard DNP in form. Keeping the pair $(k,j)$ explicit is therefore the distinguishing feature of pUNN: it is best viewed as DNP applied to an enlarged outcome space, with the same ratio-of-expectations structure but a richer input domain for the diagonal weights.

\subsection{sVQNHE}

Signed-VQNHE (sVQNHE)~\cite{Ren2025} builds on diagonal neural post-processing, but modifies the hybrid design and training procedure to more explicitly accommodate sign/phase structure. Within our DNP definition, sVQNHE retains the same measurement-outcome domain (typically $\Omega=\{0,1\}^n$) and the same diagonal reweighting layer specified by a neural network $f$ acting on bit strings, while enforcing additional restrictions on the admissible outputs of $f$ (and associated regularization) to improve numerical conditioning and reduce estimator variance. In parallel, sVQNHE assigns the learning of sign/phase information to the quantum circuit by incorporating commuting diagonal gates, which provide a structured and measurement-efficient mechanism for encoding phases beyond what the diagonal neural reweighting alone can capture.

Operationally, sVQNHE employs a layer-wise, phase--amplitude decoupling strategy with a bidirectional feedback loop between the neural network and the PQC. At each iteration, a new block of quantum layers is added: a classically simulatable, non-diagonal layer (used to shape amplitudes) together with a diagonal layer (used to learn phases). The algorithm first performs a forward initialization step, where the amplitude structure learned by the neural operator from the previous iteration is transferred to the newly added simulatable layer via a tractable classical optimization. Then, in the subsequent hybrid optimization step, the neural operator is reset and jointly optimized together with the parameters of the new diagonal layer, following the same variational objective as in basic DNP. Leveraging the commuting nature of the diagonal layers, sVQNHE is designed to reduce measurement overhead for updating the phase-related parameters, while retaining the DNP component as the amplitude-learning module. Furthermore, Ref.~\cite{Ren2025} argues that their layer-wise structure mitigates normalization issues, whereas our analysis shows that normalization remains an intrinsic issue of the DNP structure.

\subsection{VQNHE++}

The three variants above remain \emph{within} the DNP abstraction in the strict sense that they only modify the diagonal weight function (and/or its domain) while preserving the same post-processing map
\(
\mathcal{D}_f(\rho)=D_f\rho D_f^\dagger/\Tr[D_f\rho D_f^\dagger]
\)
and the same normalized ratio estimator. By contrast, VQNHE++~\cite{Zhang2023} should be viewed as a \emph{DNP-based extension}: it retains the DNP layer unchanged, but augments the overall procedure with an additional classical degree of freedom that acts on the Hamiltonian representation.

In the DNP setting we keep the quantum state preparation fixed within a measurement round, collect rotated-basis histograms that diagonalize each Hamiltonian term, and apply a diagonal (generally non-unitary) post-processing $D_f=\sum_{s} f(s)\,|s\rangle\!\langle s|$ (equivalently, $x=s\in\Omega=\{0,1\}^n$). The energy is evaluated as introduced in Eq.~\ref{eq:vqnhe-energy}, where $H=\sum_P c_P P$, $p(t)=|\langle t|\psi\rangle|^2$ is the bare-ansatz distribution, $p_P(s)$ is the distribution from the circuit used to measure $P$, $s \mapsto s'$ is the star-collapse map that sets the star bit to $0$, $\tilde s'_P$ is the companion bit string induced by the chosen diagonalization of $P$, and $\sigma_P(s)=(-1)^{s_\star}$ is the eigenvalue sign read from the star qubit.

VQNHE++ introduces a tunable gauge $W_\tau$ and tracks the transformed Hamiltonian
\begin{equation}
H_\tau = W_\tau^\dagger H W_\tau,
\end{equation}
on the operator side. Practically, this means that the coefficients used to compile the DNP estimator become $\tau$-dependent. Crucially, however, once $\tau$ is fixed, the subsequent estimation step is exactly the same as standard DNP applied to the same measurement data: the measurement pipeline and the diagonal reweighting by $f$ remain unchanged, and the energy is still computed as the same normalized ratio-of-expectations, but now with the Pauli expansion and associated tables taken from $H_\tau$ rather than $H$.

This perspective makes the algorithmic structure explicit: VQNHE++ composes a classical transformation module with a DNP evaluator, so that DNP appears as a subroutine inside a larger variational loop. When optimizing for a fixed Hamiltonian, VQNHE++ reduces to VQNHE as soon as $\tau$ is fixed: the DNP code sees constant operator-side coefficients and the optimization over $f$ (and any circuit parameters) is precisely the original form. More generally, one may alternate between updating $(\boldsymbol{\theta},f)$ at fixed $\tau$ (pure DNP on a fixed operator representation) and updating $\tau$ at fixed $(\boldsymbol{\theta},f)$, or co-optimize them if desired.

To absorb both ideas into an existing DNP implementation: (i) given $\tau$, regenerate the operator-side coefficients from $H_\tau$; then (ii) call the same DNP estimator/evaluator with those tables and the same diagonal weights. In all cases, the core non-unitary post-processing and the normalization-dependent ratio structure remain unchanged; VQNHE++ augments DNP by adding a classical, variational transformation of the Hamiltonian representation.

\section{Details for the DNP dynamic-range obstruction}
\label{app:dnp_range_obstruction}

This appendix provides the technical details supporting Sec.~\ref{subsec:dnp_range_obstruction}. The purpose is to justify the exact DNP range identity, the singular-support regularization, the Haar noncoverage bound, the Bhattacharyya certificate, and the constant-depth near-tensorizability condition.

\paragraph{Exact range identity and singular support.}

The diagonal non-unitary filter acts on the ansatz state as
\begin{equation}
|\psi_f\rangle
=
\frac{1}{\sqrt{Z_f}}
\sum_{s\in\Omega_n}
f(s)\langle s|\psi\rangle |s\rangle ,
\label{eq:dnp_filtered_state_app}
\end{equation}
where \(Z_f=\sum_{s\in\Omega_n}|f(s)|^2q_n(s)\). Thus the induced Born distribution is
\begin{equation}
p_f(s)
=
\frac{|f(s)|^2 q_n(s)}
{\sum_{t\in\Omega_n}|f(t)|^2 q_n(t)} .
\label{eq:dnp_induced_born_app}
\end{equation}
Writing \(w(s)=|f(s)|^2\), exact DNP matching requires
\begin{equation}
p_n(s)
=
\frac{w(s)q_n(s)}
{\sum_{t\in\Omega_n}w(t)q_n(t)} .
\label{eq:dnp_exact_matching_app}
\end{equation}
The exact DNP range is therefore written as
\[
\Gamma^\star(p_n\mid q_n)
=
\inf_w
\left\{
\begin{aligned}
&\frac{\max_s w(s)}{\min_s w(s)}
\;:\; \\
&p_n(s)
=
\frac{w(s)q_n(s)}
{\sum_t w(t)q_n(t)}
\end{aligned}
\right\},
\label{eq:exact_dnp_range_app}
\]
as given in Eq.~\eqref{eq:dnp_exact_range}.
If \(|f(s)|\in[1/r,r]\), then \(w(s)\in[1/r^2,r^2]\), and hence
\[
\frac{\max_s w(s)}{\min_s w(s)}
\le
r^4 .
\]
Therefore exact DNP matching requires
\begin{equation}
r
\ge
\Gamma^\star(p_n\mid q_n)^{1/4}.
\label{eq:dnp_amplitude_range_requirement_app}
\end{equation}

For full-support distributions, the matching condition implies
\[
\frac{p_n(s)}{q_n(s)}
=
\frac{w(s)}{Z_w},
\qquad
Z_w=\sum_t w(t)q_n(t).
\]
Thus \(w(s)=Z_w p_n(s)/q_n(s)\). Since \(Z_w\) is independent of \(s\), it
cancels in the range, giving
\begin{equation}
\Gamma^\star(p_n\mid q_n)
=
\frac{\max_s p_n(s)/q_n(s)}
{\min_s p_n(s)/q_n(s)} .
\label{eq:exact_range_identity_app}
\end{equation}

If \(q_n\) has zero-probability components, exact matching is singular. If \(q_n(s)=0\) and \(p_n(s)>0\), then no finite positive reweighting can realize the transformation, so \(\Gamma^\star(p_n\mid q_n)=\infty\). To extend the theory continuously to this case, define
\[
Z_n=\{s:q_n(s)=0\},
\qquad
z_n=|Z_n|.
\]
For \(\varepsilon>0\), let
\[
\widetilde q_{n,\varepsilon}(s)
=
q_n(s)+\varepsilon \mathbf{1}_{s\in Z_n},
\qquad
q_{n,\varepsilon}(s)
=
\frac{\widetilde q_{n,\varepsilon}(s)}
{1+\varepsilon z_n}.
\]
Then \(q_{n,\varepsilon}\) has full support and satisfies
\begin{equation}
\|q_{n,\varepsilon}-q_n\|_{\mathrm{TV}}
=
\frac{\varepsilon z_n}{1+\varepsilon z_n},
\label{eq:singular_regularization_tv_app}
\end{equation}
where \(\|p-q\|_{\mathrm{TV}}=\frac{1}{2}\sum_s |p(s)-q(s)|\) represents the total variation distance. Thus, if \(\varepsilon z_n=o(1)\), the regularized distribution remains infinitesimally close to \(q_n\). This regularization extends the scope of the range identity, but it does not remove the obstruction: if the target assigns non-negligible probability to configurations where the ansatz has zero or near-zero probability, the required likelihood ratio becomes very large.

\paragraph{Proof of Theorem~\ref{thm:haar_range_noncoverage}.}

\begin{restatement}[Theorem~\ref{thm:haar_range_noncoverage}]
Fix any full-support ansatz Born distribution \(q_n\) on
\[
\Omega_n=\{0,1\}^n,
\qquad
d=2^n .
\]
Let \(|\phi\rangle\) be Haar-random in \(\mathbb C^d\), and let
\[
p_\phi(s)=|\langle s|\phi\rangle|^2 .
\]
Then, for every polynomial range
\[
R_n=\operatorname{poly}(n),
\]
we have
\[
\Pr_\phi
\left[
\Gamma^\star(p_\phi\mid q_n)\le R_n
\right]
\le
2^{-n}
+
\exp[-2^{n/2}/2]
\]
for all sufficiently large \(n\).
\end{restatement}

\begin{proof}
If \(q_n\) has a zero-probability component, then a Haar-random target assigns nonzero probability to that component almost surely. Hence no finite positive DNP reweighting can realize \(p_\phi\) from \(q_n\), and the finite-range reachable probability is zero. We therefore prove the stated bound for the full-support case.

Let \(d=2^n\). A Haar-random computational-basis probability vector can be
represented as
\begin{equation}
p_\phi(s)
=
\frac{Y_s}{T},
\qquad
T
=
\sum_{t=1}^d Y_t,
\label{eq:haar_exp_representation_app}
\end{equation}
where \(Y_1,\ldots,Y_d\) are mutually independent exponential random variables
with rate \(1\), i.e.,
\[
\Pr[Y_i>y]=e^{-y}
\qquad
(y\ge 0).
\]
Since \(q_n\) has full support, the exact range identity gives
\begin{equation}
\Gamma^\star(p_\phi\mid q_n)
=
\frac{\max_s Y_s/q_n(s)}
{\min_s Y_s/q_n(s)} .
\label{eq:haar_exact_range_exp_app}
\end{equation}

Define \(S=\{s:q_n(s)\le 2/d\}\). Then \(|S|\ge d/2\). Indeed, if
\(|S|<d/2\), then more than \(d/2\) entries satisfy \(q_n(s)>2/d\), which would
imply \(\sum_s q_n(s)>1\), a contradiction. For every \(s\in S\), we have
\(1/q_n(s)\ge d/2\), and therefore
\begin{equation}
\max_s \frac{Y_s}{q_n(s)}
\ge
\frac{d}{2}\max_{s\in S}Y_s .
\label{eq:haar_numerator_reduction_app}
\end{equation}

We first lower bound the numerator. Since
\[
\Pr
\left[
Y_s<\frac{1}{2}\log d
\right]
=
1-d^{-1/2},
\]
independence gives
\[
\Pr
\left[
\max_{s\in S}Y_s<\frac{1}{2}\log d
\right]
=
(1-d^{-1/2})^{|S|}
\le
\exp(-\sqrt d/2).
\]
Therefore, with probability at least \(1-\exp(-\sqrt d/2)\),
\begin{equation}
\max_s \frac{Y_s}{q_n(s)}
\ge
\frac{d}{4}\log d .
\label{eq:haar_numerator_bound_app}
\end{equation}

For the denominator, consider the minimum over all computational-basis strings:
\begin{align}
\Pr\left[
\min_s \frac{Y_s}{q_n(s)} > t
\right]
&=
\Pr\left[
Y_s > t q_n(s)\ \text{for all }s
\right] \nonumber \\
&=
\prod_s \Pr\left[
Y_s > t q_n(s)
\right] \nonumber \\
&=
\prod_s e^{-tq_n(s)}=e^{-t}.\nonumber 
\end{align}
Choosing \(t=\log d\), we obtain
\begin{equation}
\Pr
\left[
\min_s \frac{Y_s}{q_n(s)}\le \log d
\right]
=
1-\frac{1}{d}.
\label{eq:haar_denominator_bound_app}
\end{equation}

By a union bound, with probability at least
\(1-d^{-1}-\exp(-\sqrt d/2)\), both
Eqs.~\eqref{eq:haar_numerator_bound_app} and
\eqref{eq:haar_denominator_bound_app} hold. On this event,
\[
\Gamma^\star(p_\phi\mid q_n)
\ge
\frac{(d/4)\log d}{\log d}
=
\frac{d}{4}.
\]
Hence, for every \(R_n<d/4\),
\begin{equation}
\Pr_\phi
\left[
\Gamma^\star(p_\phi\mid q_n)\le R_n
\right]
\le
\frac{1}{d}
+
\exp(-\sqrt d/2).
\label{eq:haar_range_general_bound_app}
\end{equation}
Finally, since \(d=2^n\), any polynomial range satisfies
\(R_n=\operatorname{poly}(n)<d/4\) for all sufficiently large \(n\). Therefore,
\begin{equation}
\Pr_\phi
\left[
\Gamma^\star(p_\phi\mid q_n)\le R_n
\right]
\le
2^{-n}
+
\exp[-2^{n/2}/2].
\label{eq:haar_poly_range_bound_app}
\end{equation}
\end{proof}

\paragraph{Details for Theorem~\ref{thm:near_tensorizable_bc}.}
\label{app:exp_decay_proof}

We first restate Theorem~\ref{thm:near_tensorizable_bc} in a concrete form. Consider a one-dimensional geometrically local ansatz circuit of constant depth \(d\), applied to a product input state. Let \(\ell=O(d)\) be its backward
light-cone radius. Choose kept blocks
\[
B_1,\ldots,B_m
\]
of fixed size \(b=O(1)\), separated by buffer regions \(G\) of size \(g\). We assume \(g>\ell\) and \(m=\Theta(n)\). Let \(K=\bigcup_{i=1}^m B_i\) denote the union of the kept blocks, and let \(R\) be the classical erasure map that discards all buffer coordinates. That is, for a probability distribution \(p_n\) on \(\Omega_n=\{0,1\}^n\),
\[
(Rp_n)(x_1,\ldots,x_m)
=
\sum_{y:\,(x_1,\ldots,x_m,y)\in\Omega_n}
p_n(x_1,\ldots,x_m,y),
\]
where \(x_i\in\{0,1\}^{B_i}\) is the bit string on the kept block \(B_i\), and \(y\) denotes all discarded buffer bits. Define
\begin{equation}
\bar p_n=Rp_n,
\qquad
\bar q_n=Rq_n.
\end{equation}
Let \(p_i\) and \(q_i\) be the \(i\)-th kept-block marginals of \(\bar p_n\) and \(\bar q_n\), respectively. We now restate the Theorem~\ref{thm:near_tensorizable_bc} in a concrete form.

\begin{restatement}[Theorem~\ref{thm:near_tensorizable_bc}]
Under the block construction above, the buffer-erased ansatz distribution
factorizes as
\[
\bar q_n(x_1,\ldots,x_m)
=
\prod_{i=1}^m q_i(x_i).
\]
Assume that the buffer-erased target distribution is near-tensorizable relative
to the accumulated local mismatch: there exists a constant \(\chi\ge 0\) such
that
\begin{equation}
\bar p_n(x_1,\ldots,x_m)
\le
e^{\chi m}
\prod_{i=1}^m p_i(x_i),
\label{eq:tensorization_decomposition}
\end{equation}
and
\begin{equation}
\frac{1}{m}
\sum_{i=1}^m
D_{1/2}(p_i\Vert q_i)
-
\chi
\ge
c>0.
\label{eq:near-tensorizability}
\end{equation}
Then
\begin{equation}
B(p_n,q_n)
\le
e^{-cm/2}
=
e^{-\Omega(n)}.
\end{equation}
Consequently,
\begin{equation}
\Gamma^\star(p_n\mid q_n)
\ge
e^{2cm}
=
e^{\Omega(n)}.
\end{equation}
\end{restatement}

\begin{proof}
Because the circuit has constant depth and geometrically local gates, every kept block \(B_i\) has a backward light cone of radius at most \(\ell=O(d)\). Since the buffers satisfy \(g>\ell\), the backward light cones of distinct kept blocks are disjoint. The input state is a product state, so diagonal measurement probabilities on disjoint light cones factorize. Therefore, after
the buffers are erased,
\[
\bar q_n(x_1,\ldots,x_m)
=
\prod_{i=1}^m q_i(x_i).
\]

By definition of Bhattacharyya coefficient we have,
\[
B
\left(
\bar p_n,
\bigotimes_{i=1}^m q_i
\right)
=
\sum_{x_1,\ldots,x_m}
\sqrt{
\bar p_n(x_1,\ldots,x_m)
\prod_{i=1}^m q_i(x_i)
}.
\]
Using Eq.~\eqref{eq:tensorization_decomposition}, we obtain
\[
B
\left(
\bar p_n,
\bigotimes_{i=1}^m q_i
\right)
\le
e^{\chi m/2}
\sum_{x_1,\ldots,x_m}
\prod_{i=1}^m
\sqrt{p_i(x_i)q_i(x_i)}.
\]
The sum over \(x_1,...,x_m\) factorizes over the remaining blocks, giving
\[
B
\left(
\bar p_n,
\bigotimes_{i=1}^m q_i
\right)
\le
e^{\chi m/2}
\prod_{i=1}^m B(p_i,q_i).
\]
Taking \(-2\log\) of both sides yields
\[
D_{1/2}
\left(
\bar p_n
\middle\Vert
\bigotimes_{i=1}^m q_i
\right)
\ge
\sum_{i=1}^m D_{1/2}(p_i\Vert q_i)
-
\chi m.
\]
Since
\[
\bar q_n
=
\bigotimes_{i=1}^m q_i,
\]
the near-tensorizability condition~\eqref{eq:near-tensorizability} gives
\[
D_{1/2}(\bar p_n\Vert \bar q_n)
\ge
cm.
\]

Finally, the erasure map \(R\) is a classical stochastic map, so the
R\'enyi-\(1/2\) divergence decreases under the erasure map:
\[
D_{1/2}(p_n\Vert q_n)
\ge
D_{1/2}(\bar p_n\Vert \bar q_n)
\ge
cm.
\]
Using
\[
D_{1/2}(p_n\Vert q_n)
=
-2\log B(p_n,q_n),
\]
we obtain
\[
B(p_n,q_n)
\le
e^{-cm/2}.
\]
The Bhattacharyya certificate given by Prop.~\ref{prop:bc_range_certificate},
\[
\Gamma^\star(p_n\mid q_n)
\ge
B(p_n,q_n)^{-4},
\]
then gives
\[
\Gamma^\star(p_n\mid q_n)
\ge
e^{2cm}.
\]
Since \(m=\Theta(n)\), this is exponential in the system size.
\end{proof}

We now look into the physical meaning of the near-tensorizability condition~\eqref{eq:near-tensorizability}. The condition does not require the target distribution to be a product distribution. It only requires that the correlations of the target state between the remaining blocks after the erasure map do not fully cancel the block-wise divergence between the target and the ansatz after the buffer coarse-graining.

This is a natural condition for many physically motivated Hamiltonians. Most Hamiltonians studied in variational quantum algorithms are local or geometrically structured, and their ground states often have stronger correlations among nearby sites than among distant sites. After buffer regions are discarded, such states may still retain some degree of correlations among the kept blocks, but those correlations are not expected to eliminate every local discrepancy between a shallow ansatz and the target distribution over an extensive number of regions.

Conversely, if the target requires highly nonlocal or globally entangled structure, then the difficulty is not unique to DNP: constructing an ansatz expressive enough to represent such structure typically moves the circuit toward regimes associated with barren plateaus, where gradient variances vanish exponentially with system size~\cite{McClean2018, Holmes2022}. Thus the theorem should be read as describing the relevant expressibility--trainability tradeoff: shallow local ans\"atze remain trainable but leave extensive local mismatch, while ans\"atze capable of capturing global structure may become difficult to optimize.

\section{Finite-shot stability of U-VQNHE}
\label{app:uvqnhe_finite_shot}

This appendix gives the finite-shot stability analysis for U-VQNHE. The key difference from diagonal non-unitary post-processing is that U-VQNHE uses a diagonal unitary phase map rather than a non-unitary amplitude reweighting. As a result, the post-processed state remains normalized by construction, and the finite-shot estimator does not contain a stochastic normalization denominator.

The U-VQNHE phase map is
\begin{equation}
U_g
=
\sum_{s\in\Omega_n}
e^{i g(s)}
|s\rangle\langle s|,
\qquad
g(s)\in\mathbb R .
\label{eq:uvqnhe_phase_map_app}
\end{equation}
For a normalized ansatz state \(|\psi\rangle\), the post-processed state is
\[
|\psi_g\rangle
=
U_g|\psi\rangle .
\]
Since \(|e^{ig(s)}|=1\), the map \(U_g\) is unitary. Therefore
\[
\| |\psi_g\rangle \|
=
\| |\psi\rangle \|
=
1 .
\]
Consequently, for any Hamiltonian \(H\) with ground-state energy \(E_{\rm gs}\), the exact U-VQNHE energy
\begin{equation}
E_g
=
\langle\psi|U_g^\dagger H U_g|\psi\rangle
\label{eq:uvqnhe_exact_energy_app}
\end{equation}
satisfies the variational bound
\begin{equation}
E_g
\ge
E_{\rm gs}.
\label{eq:uvqnhe_exact_variational_bound_app}
\end{equation}
This is the exact variational safety of U-VQNHE.

We now describe the finite-shot estimator. Let the Hamiltonian be decomposed into Pauli terms as in Eq.~\eqref{eq:pauli_hamiltonian_app}. For each Pauli term \(P\), the corresponding U-VQNHE expectation value can be estimated from one or two real-valued circuit components, each for real and imaginary values of the unitary transformation, indexed by \(\tau\in T_P\subseteq\{R,I\}\). For each component, write
\begin{equation}
\mu_{P,\tau}
=
\mathbb E_{s\sim \pi_{P,\tau}}
\left[
X_{P,\tau}(s)
\right],
\qquad
|X_{P,\tau}(s)|\le 1 .
\label{eq:uvqnhe_component_expectation_app}
\end{equation}
The boundedness follows from the fact that the learned factor is a phase
\(e^{ig(s)}\), whose modulus is one. In particular, the real and imaginary components are bounded by \(1\) in absolute value.

The exact energy can then be written as
\begin{equation}
E_g
=
\sum_{P\in\mathcal P}
c_P
\sum_{\tau\in T_P}
\mu_{P,\tau}.
\label{eq:uvqnhe_energy_component_sum_app}
\end{equation}
Given \(N\) independent shots for component \((P,\tau)\), define
\[
\widehat{\mu}_{P,\tau}
=
\frac{1}{N}
\sum_{k=1}^N
X_{P,\tau}(s_k),
\]
and
\begin{equation}
\widehat E_g
=
\sum_{P\in\mathcal P}
c_P
\sum_{\tau\in T_P}
\widehat{\mu}_{P,\tau}.
\label{eq:uvqnhe_finite_shot_estimator_app}
\end{equation}

\begin{proposition}[Unbiasedness and variance of the U-VQNHE estimator]
\label{prop:uvqnhe_variance_app}
Assume that the shot ensembles are independent across all components \((P,\tau)\), and that each component is estimated with \(N\) independent shots. Then \(\widehat E_g\) is an unbiased estimator of \(E_g\), and
\begin{equation}
\operatorname{Var}(\widehat E_g)
\le
\frac{2C}{N},
\qquad
C=
\sum_{P\in\mathcal P}c_P^2 .
\label{eq:uvqnhe_equal_shot_variance_app}
\end{equation}
\end{proposition}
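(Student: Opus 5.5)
Unbiasedness comes first, from linearity. Each $\widehat{\mu}_{P,\tau}$ is an empirical mean of $N$ i.i.d. draws of $X_{P,\tau}(s)$ with $s\sim\pi_{P,\tau}$, so $\mathbb{E}[\widehat{\mu}_{P,\tau}]=\mu_{P,\tau}$. Substituting this into Eq.~\eqref{eq:uvqnhe_finite_shot_estimator_app} and comparing with Eq.~\eqref{eq:uvqnhe_energy_component_sum_app} gives $\mathbb{E}[\widehat E_g]=E_g$. No normalization ratio appears in the estimator, so unlike the DNP estimator there is no bias from a stochastic denominator.

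For the variance, I would use the assumed independence of the shot ensembles across all pairs $(P,\tau)$. This makes the random variables $\widehat{\mu}_{P,\tau}$ mutually independent, so the cross covariances vanish and
\begin{equation*}
\operatorname{Var}(\widehat E_g)=\sum_{P\in\mathcal P}c_P^2\sum_{\tau\in T_P}\operatorname{Var}(\widehat{\mu}_{P,\tau}).
\end{equation*}
Within a single component the $N$ shots are i.i.d., which gives $\operatorname{Var}(\widehat{\mu}_{P,\tau})=\operatorname{Var}(X_{P,\tau})/N$. Since $|X_{P,\tau}(s)|\le 1$ by Eq.~\eqref{eq:uvqnhe_component_expectation_app}, we have $\operatorname{Var}(X_{P,\tau})\le\mathbb{E}[X_{P,\tau}^2]\le 1$. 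Finally, $|T_P|\le 2$ because $T_P\subseteq\{R,I\}$. Combining these gives
\begin{equation*}
\operatorname{Var}(\widehat E_g)\le\frac{1}{N}\sum_P c_P^2\,|T_P|\le\frac{2C}{N}.
\end{equation*}

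The only step that needs care is the boundedness $|X_{P,\tau}(s)|\le 1$, which the paper asserts just before the proposition. Concretely, $X_{P,R}(s)=(-1)^{s_\star}\operatorname{Re}\!\left(e^{-ig(s')}e^{ig(\tilde{s'}_P)}\right)$, and the imaginary component has the same form with $\operatorname{Im}$ in place of $\operatorname{Re}$. Each is the real or imaginary part of a unit-modulus complex number multiplied by a sign, so it lies in $[-1,1]$.

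This requires that the per-shot quantities be defined exactly as in Eq.~\eqref{uvqnhe}, as sign-weighted phase factors with no extra normalization or amplitude factor. I would check this against the measurement-circuit construction of Appendix~\ref{app:vqnhe}. If some component needed a rescaling factor, for example from how the imaginary-part circuit encodes the phase, the constant $2$ would have to be adjusted accordingly.
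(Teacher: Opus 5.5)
Your proposal is correct and follows the paper's proof essentially step for step: unbiasedness by linearity, additivity of the variance from independence across components, the $\sigma_{P,\tau}^2/N$ scaling within each component, the bound $\sigma_{P,\tau}^2\le 1$ from $|X_{P,\tau}(s)|\le 1$, and finally $|T_P|\le 2$. Your extra check that each per-shot quantity is a sign times the real or imaginary part of a unit-modulus phase is a useful sanity check on the boundedness assumption, which the paper states without further justification before the proposition.
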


\begin{proof}
Unbiasedness follows from linearity of expectation. Since
\[
\mathbb E[\widehat{\mu}_{P,\tau}]
=
\mu_{P,\tau},
\]
we have
\[
\mathbb E[\widehat E_g]
=
\sum_{P\in\mathcal P}
c_P
\sum_{\tau\in T_P}
\mathbb E[\widehat{\mu}_{P,\tau}]
=
\sum_{P\in\mathcal P}
c_P
\sum_{\tau\in T_P}
\mu_{P,\tau}
=
E_g .
\]

For the variance, independence across components gives
\[
\operatorname{Var}(\widehat E_g)
=
\sum_{P\in\mathcal P}
c_P^2
\sum_{\tau\in T_P}
\operatorname{Var}(\widehat{\mu}_{P,\tau}) .
\]
Since each component is estimated with \(N\) independent shots,
\[
\operatorname{Var}(\widehat{\mu}_{P,\tau})
=
\frac{\sigma_{P,\tau}^2}{N},
\qquad
\sigma_{P,\tau}^2
=
\operatorname{Var}_{s\sim\pi_{P,\tau}}
\left[
X_{P,\tau}(s)
\right].
\]
Therefore
\[
\operatorname{Var}(\widehat E_g)
=
\frac{1}{N}
\sum_{P\in\mathcal P}
c_P^2
\sum_{\tau\in T_P}
\sigma_{P,\tau}^2 .
\]
The boundedness \(|X_{P,\tau}(s)|\le 1\) implies
\(\sigma_{P,\tau}^2\le 1\), so
\[
\operatorname{Var}(\widehat E_g)
\le
\frac{1}{N}
\sum_{P\in\mathcal P}
c_P^2
|T_P|.
\]
Since every Pauli term is associated with at most two estimators,
\[
|T_P|\le 2,
\]
and hence
\[
\operatorname{Var}(\widehat E_g)
\le
\frac{2}{N}
\sum_{P\in\mathcal P}c_P^2
=
\frac{2C}{N}.
\]
\end{proof}

The variance bound shows that U-VQNHE has ordinary sample-mean behavior, meaning that the finite-shot error decays as \(O(N^{-1/2})\) without any additional random
normalization denominator.

Next, we give a high-probability finite-shot bound from Hoeffding's inequality.

\begin{proposition}[Hoeffding concentration for U-VQNHE]
\label{prop:uvqnhe_hoeffding_app}
Under the assumptions of Proposition~\ref{prop:uvqnhe_variance_app}, with
probability at least \(1-\delta\),
\begin{equation}
|\widehat E_g-E_g|
\le
2\|c\|_2
\sqrt{\frac{\log(2/\delta)}{N}},
\label{eq:uvqnhe_lcb_equal_shot_app}
\end{equation}
where \(\|c\|_2^2 = \sum_{P\in\mathcal P}c_P^2\).
\end{proposition}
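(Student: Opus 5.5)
The estimator is a weighted sum of independent bounded random variables, so I will apply Hoeffding's inequality directly to it.

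\textbf{Step 1: write the estimator as a single sum.}
Expand
\[
\widehat E_g-E_g=\sum_{P\in\mathcal P}\sum_{\tau\in T_P}\sum_{k=1}^N \frac{c_P}{N}\bigl(X_{P,\tau}(s_k^{(P,\tau)})-\mu_{P,\tau}\bigr).
\]
The shot ensembles are independent across components and within each component. So this is a sum of independent, mean-zero terms, and unbiasedness is already given by Proposition~\ref{prop:uvqnhe_variance_app}.

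\textbf{Step 2: bound each summand's range.}
By the phase-boundedness in Eq.~\eqref{eq:uvqnhe_component_expectation_app}, $|X_{P,\tau}|\le 1$. Hence each summand $\tfrac{c_P}{N}X_{P,\tau}(s_k)$ lies in an interval of length $2|c_P|/N$.

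\textbf{Step 3: apply Hoeffding.}
Two-sided Hoeffding gives
\[
\Pr\bigl[|\widehat E_g-E_g|\ge t\bigr]\le 2\exp\!\Bigl(-\frac{2t^2}{\sum (2|c_P|/N)^2}\Bigr).
\]
The sum in the denominator runs over all $(P,\tau,k)$, so it equals
\[
\sum_P |T_P|\, N\,\frac{4c_P^2}{N^2}\le \frac{8\|c\|_2^2}{N},
\]
using $|T_P|\le 2$. The tail bound therefore becomes $2\exp(-Nt^2/(4\|c\|_2^2))$.

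\textbf{Step 4: solve for $t$.}
Set the tail bound equal to $\delta$. This gives $t=2\|c\|_2\sqrt{\log(2/\delta)/N}$, which is exactly Eq.~\eqref{eq:uvqnhe_lcb_equal_shot_app}.

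\textbf{Main point of care.}
The only delicate step is the constant bookkeeping. The factor $|T_P|\le 2$ and the range $2|c_P|/N$ (rather than $|c_P|/N$) must combine to give exactly the factor $2\|c\|_2$. The computation above shows that they do.

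If one prefers a cruder route, a union bound over components would give a worse dependence on $|\mathcal P|$. Applying a single Hoeffding bound to the whole sum avoids that loss.
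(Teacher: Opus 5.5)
Your proposal is correct and follows the paper's proof step for step. Like the paper, you write $\widehat E_g-E_g$ as a sum of independent centered terms, one per shot of each component $(P,\tau)$, each with range $2|c_P|/N$; you then apply a single two-sided Hoeffding bound and use $|T_P|\le 2$ to reach $2\|c\|_2\sqrt{\log(2/\delta)/N}$. The only difference is cosmetic: you bound $\sum_P c_P^2|T_P|\le 2\|c\|_2^2$ before solving for $t$, whereas the paper solves first in terms of $S_T$ and bounds it afterward.
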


\begin{proof}
Define
\[
S_T
=
\sum_{P\in\mathcal P}
c_P^2
|T_P|.
\]
Write the finite-shot error as a sum of independent centered bounded random
variables:
\[
\widehat E_g-E_g
=
\sum_{P\in\mathcal P}
\sum_{\tau\in T_P}
\sum_{k=1}^{N}
\frac{c_P}{N}
\left(
X_{P,\tau}(s_k)-\mu_{P,\tau}
\right).
\]
Since \(X_{P,\tau}(s)\in[-1,1]\), each centered summand lies in an interval of length \(\frac{2|c_P|}{N}\). Hoeffding's inequality then gives
\[
\begin{gathered}
\Pr
\left[
|\widehat E_g-E_g|\ge \epsilon
\right] \\
\le
2\exp
\left(
-\frac{2\epsilon^2}
{
\sum_{P\in\mathcal P}
\sum_{\tau\in T_P}
N
\left(
2|c_P|/N
\right)^2
}
\right).
\end{gathered}
\]
The denominator simplifies as
\[
\sum_{P\in\mathcal P}
\sum_{\tau\in T_P}
N
\left(
2|c_P|/N
\right)^2
=
\frac{4}{N}
\sum_{P\in\mathcal P}
c_P^2
|T_P|
=
\frac{4S_T}{N}.
\]
Therefore,
\[
\Pr
\left[
|\widehat E_g-E_g|\ge \epsilon
\right]
\le
2\exp
\left(
-\frac{N\epsilon^2}{2S_T}
\right).
\]
Solving the right-hand side for \(\epsilon\) gives, with probability at least
\(1-\delta\),
\[
|\widehat E_g-E_g|
\le
\sqrt{
\frac{2S_T\log(2/\delta)}{N}
}.
\]
Since every Pauli term is associated with at most two estimators, \(|T_P|\le 2\)
for all \(P\), and therefore
\[
S_T
=
\sum_{P\in\mathcal P}
c_P^2
|T_P|
\le
2\sum_{P\in\mathcal P}c_P^2
=
2\|c\|_2^2 .
\]
Substituting this into the previous inequality gives
\[
|\widehat E_g-E_g|
\le
2\|c\|_2
\sqrt{\frac{\log(2/\delta)}{N}} .
\]
\end{proof}

The exact U-VQNHE objective remains variational because \(U_g\) is unitary, so \(E_g\ge E_{\rm gs}\). A finite-shot estimate \(\widehat E_g\), however, is an ordinary statistical estimator and can fluctuate above or below \(E_g\). The important point is that this fluctuation is tightly controlled around the exact U-VQNHE energy:
\[
|\widehat E_g-E_g|
=
O\!\left(
\|c\|_2
\sqrt{\frac{\log(1/\delta)}{N}}
\right)
\]
with probability at least \(1-\delta\). Unlike DNP, this concentration bound has no dependence on a neural-network amplitude range or on a stochastic normalization denominator.

Combining the concentration bound with \(E_g\ge E_{\rm gs}\) gives a certified finite-shot statement with explicit dependence only on the shot number \(N\) and the Hamiltonian coefficient norm \(\|c\|_2\). In particular, from a measured estimate \(\widehat E_g\), one obtains the high-probability certificate
\begin{equation}
E_{\rm gs}
\le
E_g
\le
\widehat E_g
+
2\|c\|_2
\sqrt{\frac{\log(2/\delta)}{N}} .
\end{equation}
Thus U-VQNHE provides a controlled upper-bound certificate for the ground-state energy up to an explicit finite-shot error bar. In contrast, DNP in general does not give an analogous range-independent certificate, because its finite-shot behavior depends on the neural-network amplitude range and on the stochastic normalization denominator.

\section{Relation to other non-unitary variational eigensolvers}
\label{app:other_nonunitary}

DNP is not the only approach that introduces non-unitary transformations into VQE. Other related methods employ structured Jastrow correlators or diagonal transformations in the occupation-number basis. These constructions share the exact normalized objective
\begin{equation}
    E_O
    =
    \frac{
        \langle\psi|O^\dagger H O|\psi\rangle
    }{
        \langle\psi|O^\dagger O|\psi\rangle
    },
    \label{eq:general_nonunitary_objective}
\end{equation}
where \(O\) is generally non-unitary. Whenever \(\langle\psi|O^\dagger O|\psi\rangle>0\), Eq.~\eqref{eq:general_nonunitary_objective} is the energy expectation value of the normalized state and therefore satisfies the Rayleigh--Ritz bound exactly.

The finite-shot behavior nevertheless depends on how the numerator and denominator are measured and on how \(O\) is parametrized. Consequently, the sampled-support results established for DNP do not automatically apply to every non-unitary variational algorithm. Here we identify the connections that follow directly from their estimator structures, without asserting identical sample-complexity behavior.

\subsection{Structured Jastrow transformations}

The non-unitary variational quantum eigensolver~(nu-VQE) introduces a Jastrow-inspired operator acting on the circuit state~\cite{Benfenati2021}. Its polynomial implementation uses the linearized Hermitian operator
\begin{equation}
    J_{\mathrm{lin}}
    =
    I-A,
    \qquad
    A
    =
    \sum_{i=1}^{n}\alpha_i Z_i
    +
    \sum_{i<j}\lambda_{ij}Z_iZ_j ,
    \label{eq:linearized_jastrow}
\end{equation}
with real variational parameters \(\alpha_i\) and \(\lambda_{ij}\). The corresponding energy is
\begin{equation}
    E_J
    =
    \frac{
        \langle\psi|J_{\mathrm{lin}} HJ_{\mathrm{lin}}|\psi\rangle
    }{
        \langle\psi|J_{\mathrm{lin}}^2|\psi\rangle
    }.
    \label{eq:nuvqe_energy}
\end{equation}
The transformed operators must retain the complete two-sided expansion,
\begin{align}
    J_{\mathrm{lin}}HJ_{\mathrm{lin}}
    &=
    H-AH-HA+AHA,
    \label{eq:nuvqe_numerator_expansion}
    \\
    J_{\mathrm{lin}}^2
    &=
    I-2A+A^2.
    \label{eq:nuvqe_denominator_expansion}
\end{align}
Because \(A\) contains \(O(n^2)\) Pauli terms, the products in Eqs.~\eqref{eq:nuvqe_numerator_expansion} and \eqref{eq:nuvqe_denominator_expansion} can generate up to \(O(n^4)\) Pauli products per Hamiltonian term before identical terms are combined or commuting observables are grouped. This polynomial measurement overhead is the cost of representing the non-unitary transformation through transformed observables rather than through a deeper state-preparation circuit.

Finite-shot estimates of Eq.~\eqref{eq:nuvqe_energy} form a ratio of sampled expectation values. Statistical errors can therefore be amplified when the estimated normalization is small or poorly conditioned. This observation does not, by itself, imply the sampled-support instability. Unlike unrestricted DNP, the values of the Jastrow transformation are coupled through only \(O(n^2)\) parameters, and the numerator and denominator contain correlated operator products generated by the same structured transformation. Establishing a quantitative sampling bound for nu-VQE therefore requires an analysis specific to this operator family.

The Jastrow quantum circuit~(JQC) construction similarly augments a circuit state with the correlator~\cite{Mazzola2019,Jastrow1955}
\begin{equation}
    P_J=e^{J},
    \qquad
    J=
    \sum_{k\neq l}\lambda_{kl}Z_kZ_l.
    \label{eq:jqc_operator}
\end{equation}
Its exact energy has the normalized form
\begin{equation}
    E_{\mathrm{JQC}}
    =
    \frac{
        \langle\psi|P_J^\dagger H P_J|\psi\rangle
    }{
        \langle\psi|P_J^\dagger P_J|\psi\rangle
    }.
    \label{eq:jqc_energy}
\end{equation}
JQC admits different implementation strategies. An ancilla-assisted construction reconstructs and reweights measurement probabilities, whereas a transformed-Hamiltonian construction measures the Pauli expansions of \(P_J^\dagger H P_J\) and \(P_J^\dagger P_J\). For the full exponential correlator, the latter expansions can contain exponentially many Pauli terms; truncating the exponential restores polynomial operator growth at the cost of restricting the variational family~\cite{Mazzola2019}.

\subsection{Cascaded variational quantum eigensolver}

The cascaded variational quantum eigensolver~(CVQE) separates quantum sampling from classical parameter optimization~\cite{Gunlycke2024}. A fixed guiding state
\begin{equation}
    |\psi_0\rangle=U|0\rangle
\end{equation}
is sampled using a prescribed family of measurement circuits, while the variational transformation is diagonal in the occupation-number basis:
\begin{equation}
    \widehat{\lambda}(\boldsymbol{\phi})
    =
    \sum_{\boldsymbol{n}}
    \lambda_{\boldsymbol{n}}(\boldsymbol{\phi})
    |\boldsymbol{n}\rangle\langle\boldsymbol{n}|.
    \label{eq:cvqe_diagonal_operator}
\end{equation}
The trial state and its normalized energy are
\begin{align}
    |\psi(\boldsymbol{\phi})\rangle
    &=
    e^{i\widehat{\lambda}(\boldsymbol{\phi})}
    |\psi_0\rangle,
    \label{eq:cvqe_state}
    \\
    E_{\mathrm{CVQE}}(\boldsymbol{\phi})
    &=
    \frac{
        \langle\psi_0|
        e^{-i\widehat{\lambda}^{\dagger}(\boldsymbol{\phi})}
        H
        e^{i\widehat{\lambda}(\boldsymbol{\phi})}
        |\psi_0\rangle
    }{
        \langle\psi_0|
        e^{-i\widehat{\lambda}^{\dagger}(\boldsymbol{\phi})}
        e^{i\widehat{\lambda}(\boldsymbol{\phi})}
        |\psi_0\rangle
    }.
    \label{eq:cvqe_energy}
\end{align}
When all \(\lambda_{\boldsymbol n}\) are real, \(e^{i\widehat{\lambda}}\) is unitary and the denominator in Eq.~\eqref{eq:cvqe_energy} is exactly one. When all \(\lambda_{\boldsymbol n}(\boldsymbol{\phi})\) are real, \(e^{i\widehat{\lambda}(\boldsymbol{\phi})}\) is unitary and the denominator in Eq.~\eqref{eq:cvqe_energy} is exactly one. For complex \(\lambda_{\boldsymbol n}(\boldsymbol{\phi})\), however, the transformation modifies both the phase and magnitude of each occupation-number amplitude. Thus, the real and imaginary components control phase and amplitude corrections, respectively. Writing \(p_0(\boldsymbol n)=|\langle\boldsymbol n|\psi_0\rangle|^2\), the normalization of the transformed state becomes
\begin{equation}
    \Lambda(\boldsymbol{\phi})
    =
    \sum_{\boldsymbol n}
    e^{-2\,\operatorname{Im}
    \lambda_{\boldsymbol n}(\boldsymbol{\phi})}
    p_0(\boldsymbol n).
    \label{eq:cvqe_normalization}
\end{equation}
The imaginary components therefore act as configuration-dependent weights on the guiding-state distribution, making the CVQE objective a normalized amplitude-reweighting functional when the transformation is non-unitary. Thus, the imaginary parts of the CVQE parameters act as configuration-dependent amplitude weights, while their real parts supply phase corrections.

The non-unitary CVQE objective is therefore a ratio estimator with a parameter-dependent normalization constructed from probability-mass functions measured on the guiding-state circuits. Large variations in the imaginary components of \(\lambda_{\boldsymbol n}\) can assign substantial effective weight to configurations that are rare under \(p_0\), thereby reducing the effective sample size and increasing finite-shot sensitivity to the normalization. This behavior is algebraically analogous to amplitude reweighting in DNP. However, CVQE employs a distinct measurement construction and a structured parametrization. Under the locality and scaling assumptions adopted in Ref.~\cite{Gunlycke2024}, the number of measurement settings and the number of terms entering the classical energy evaluation grow at most polynomially with system size, and the collected measurement samples can be reused throughout the optimization. This establishes polynomial algebraic post-processing conditional on the available samples, but it does not provide a uniform polynomial bound on the number of shots required to estimate a potentially ill-conditioned normalization. Consequently, the DNP results motivate examining finite-shot conditioning in CVQE but do not, by themselves, establish an exponential measurement requirement for that method.

\section{Benchmark Hamiltonians and Experimental Details}
\label{app:benchmark-details}

This appendix summarizes the benchmark Hamiltonians and experimental settings used to compare VQE, VQNHE, and U-VQNHE, except for the standard 1D TFIM. For each benchmark, a VQE reference state was first obtained by optimizing a parameterized quantum circuit. The optimized circuit parameters were then fixed, and the same reference state was used as the input state for both VQNHE and U-VQNHE. All circuit expectation values were estimated using \(N_{\mathrm{shots}}=1024\) shots per measurement circuit.

\paragraph{Random-coefficient Ising model.}

The first benchmark is a one-dimensional random-coefficient Ising model on
\(n=12\) qubits with open boundary conditions:
\begin{equation}
H_{\mathrm{rand}}
=
\sum_{i=1}^{n-1} J_i Z_i Z_{i+1}
+
\sum_{i=1}^{n} h_i^{x} X_i
+
\sum_{i=1}^{n} h_i^{z} Z_i .
\label{eq:random-ising}
\end{equation}
The coefficients were sampled independently as
\begin{align}
J_i &\sim \mathcal{N}(1.0,0.3^2), \\
h_i^{x} &\sim \mathcal{N}(-1.0,0.3^2), \\
h_i^{z} &\sim \mathcal{N}(0.0,0.3^2).
\end{align}
The sampled values are shown in Figure~\ref{fig:random-ising-couplings}. The resulting Hamiltonian contains \(35\) Pauli terms.

\begin{figure}[!tbp]
    \centering
    \includegraphics[width=0.9\columnwidth]{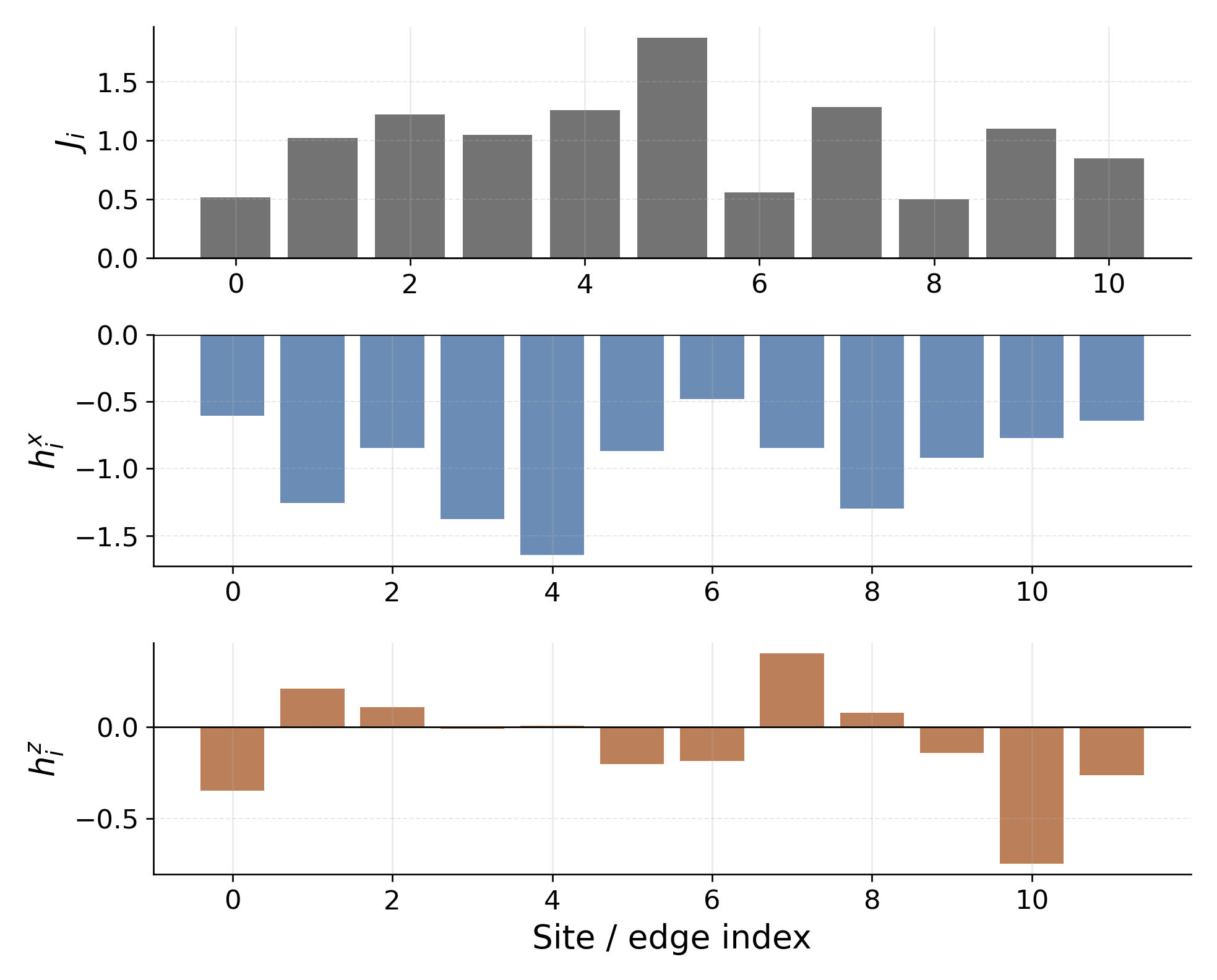}
    \caption{\justifying
    Coupling coefficients for the fixed 12-qubit random Ising realization. The upper panel shows the 11 nearest-neighbor interaction strengths $J_i$, while the middle and lower panels show the transverse fields $h_i^x$ and longitudinal fields $h_i^z$, respectively, at each of the 12 lattice sites. The same disorder realization was used for all five VQE, VQNHE, and U-VQNHE repetitions.
    }
    \label{fig:random-ising-couplings}
\end{figure}

The VQE reference circuit is a one-layer Ising hardware-efficient ansatz. The circuit begins with a Hadamard gate on every qubit and then applies
\begin{equation}
\begin{aligned}
&U_{\mathrm{Ising}}(\boldsymbol{\theta}) \\
&=
\left[
\prod_{i=1}^{n} R_x^{(i)}(\beta_i)
\right]
\left[
\prod_{i=1}^{n-1}
\mathrm{CX}_{i,i+1}
R_z^{(i+1)}(\alpha_i)
\mathrm{CX}_{i,i+1}
\right] .
\end{aligned}
\label{eq:ising-ansatz}
\end{equation}
For \(n=12\), this circuit contains \(2n-1=23\) variational parameters.

\paragraph{Two-dimensional transverse-field Ising model.}

The second benchmark is a transverse-field Ising model on a \(3\times4\) rectangular lattice:
\begin{equation}
H_{\mathrm{2D}}
=
J\sum_{\langle i,j\rangle} Z_iZ_j
+
h_x\sum_{i=1}^{12}X_i,
\label{eq:2d-tfim}
\end{equation}
where \(\langle i,j\rangle\) denotes horizontal and vertical nearest-neighbor pairs. We use open boundary conditions in both lattice directions, with \(J=1\) and \(h_x=-1\). The lattice contains \(17\) interaction edges, so the Hamiltonian has \(29\) Pauli terms in total.

The same one-layer Ising ansatz in Eq.~\eqref{eq:ising-ansatz} was used, again with \(23\) circuit parameters. The entangling gates follow a one-dimensional nearest-neighbor chain, while the Hamiltonian has two-dimensional connectivity. This choice keeps the circuit depth and parameter count identical to those of the random Ising benchmark while making the target Hamiltonian structurally more demanding.

\paragraph{Random two-dimensional XYZ model.}
\label{app:random-xyz}

The third benchmark is a disordered XYZ spin model on a \(3\times3\) square lattice with open boundary conditions. The Hamiltonian is
\begin{equation}
H_{\mathrm{XYZ}}
=
\sum_{\langle i,j\rangle}
\sum_{\alpha\in\{x,y,z\}}
J_{ij}^{\alpha}
\sigma_i^\alpha\sigma_j^\alpha
+
\sum_{i=1}^{9}
\sum_{\alpha\in\{x,y,z\}}
h_i^\alpha\sigma_i^\alpha ,
\label{eq:random-xyz}
\end{equation}
where \(\langle i,j\rangle\) denotes horizontal and vertical nearest-neighbor pairs, and
\(\sigma_i^x=X_i,\sigma_i^y=Y_i,\sigma_i^z=Z_i\). The \(3\times3\) open lattice contains \(12\) nearest-neighbor edges. Each edge contributes three two-qubit interaction terms, and each of the nine sites contributes three single-qubit field terms. The Hamiltonian therefore contains \(3\times12+3\times9=63\) Pauli terms.

The anisotropic exchange coefficients were sampled independently as \(J_{ij}^{\alpha}\sim\mathcal{N}(0,1)\), and the local fields were sampled independently as \(h_i^\alpha\sim\mathcal{N}(0,0.5^2)\), whose values are shown in Figure~\ref{fig:random-xyz-couplings}. The zero-mean exchange distribution produces both positive and negative couplings, introducing frustration. Independent coefficients were used for the \(XX\), \(YY\), and \(ZZ\) interactions, so the Hamiltonian is not globally spin-rotation invariant.

\begin{figure}[tb]
    \centering
    \includegraphics[width=0.97\columnwidth]{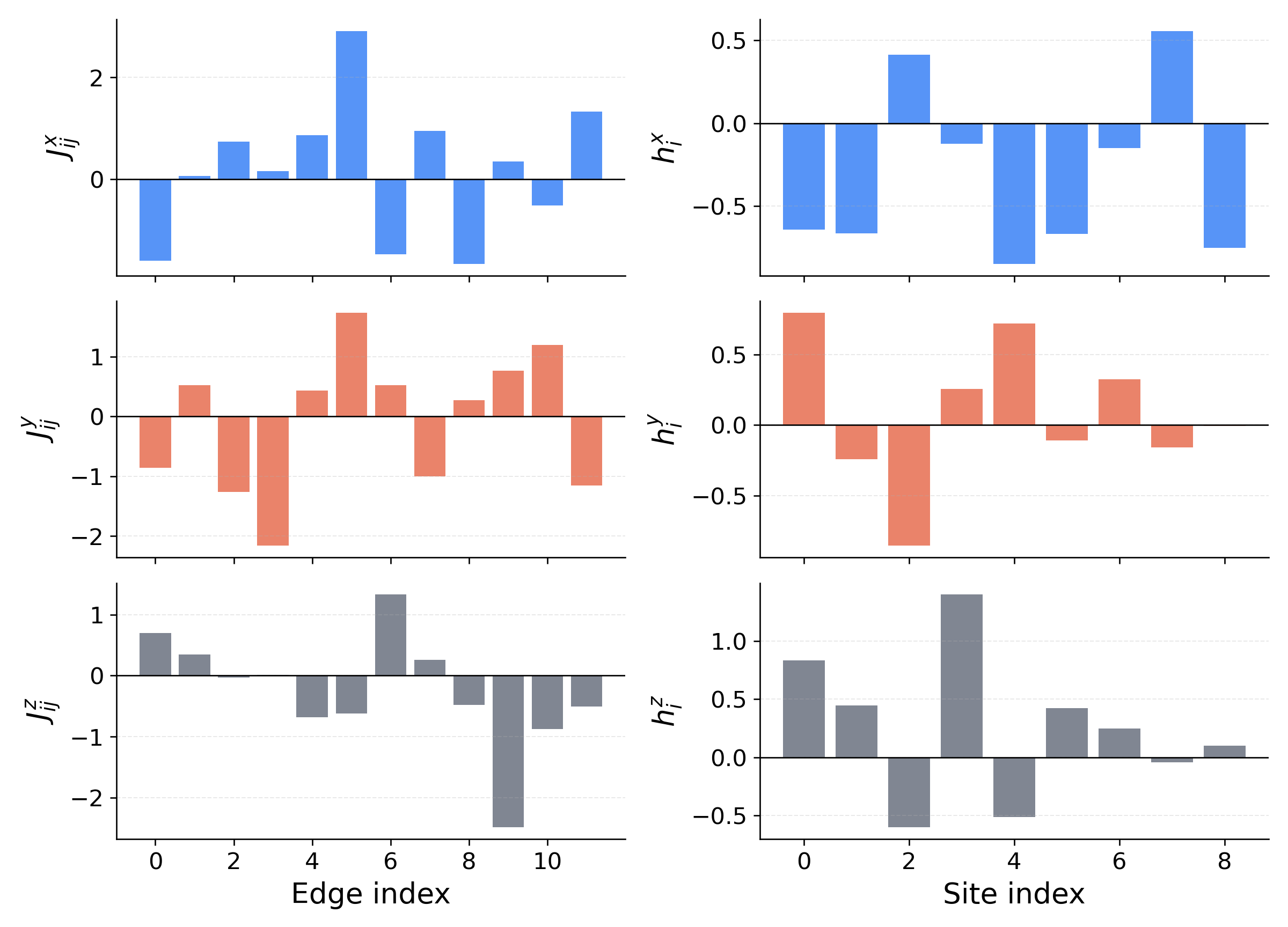}
    \caption{\justifying
    Exchange couplings and local fields for the random XYZ realization on the $3\times3$ open lattice. The left column shows the independently sampled nearest-neighbor exchange coefficients $J_{ij}^{x}$, $J_{ij}^{y}$, and $J_{ij}^{z}$ for the 12 lattice edges. The right column shows the corresponding local fields $h_i^{x}$, $h_i^{y}$, and $h_i^{z}$ at the nine lattice sites. The realization was held fixed across all five VQE, VQNHE, and U-VQNHE repetitions.
    }
    \label{fig:random-xyz-couplings}
\end{figure}

A single disorder realization was fixed across all optimization trials. Thus, the reported standard deviations measure finite-shot and optimization variability for one Hamiltonian instance, rather than variation over disorder realizations.

The independent random couplings remove translation and lattice-reflection symmetries. The simultaneous presence of random fields along all three spin directions also breaks global spin-flip symmetry, total magnetization conservation, and continuous spin-rotation symmetry. In particular, the single-qubit \(Y\)-field terms make the Hamiltonian generally complex in the computational basis. This benchmark therefore provides a useful test of the phase transformation used in U-VQNHE.

For this benchmark, we used a symmetry-unconstrained two-dimensional hardware-efficient ansatz. The circuit starts from \(\ket{0}^{\otimes 9}\) and applies independent \(R_y\) and \(R_z\) rotations to every qubit:
\begin{equation}
U_{\mathrm{rot}}^{(\ell)}
=
\prod_{i=1}^{9}
R_z^{(i)}(\phi_{\ell i})
R_y^{(i)}(\theta_{\ell i}) .
\label{eq:xyz-rotation-layer}
\end{equation}
Each entangling layer applies controlled-\(Z\) gates on every horizontal and vertical nearest-neighbor edge as
\(U_{\mathrm{ent}}=\prod_{\langle i,j\rangle}\mathrm{CZ}_{i,j}\). For entangling depth \(L=2\), the full circuit is
\begin{equation}
U_{\mathrm{XYZ}}
=
U_{\mathrm{rot}}^{(2)}
U_{\mathrm{ent}}
U_{\mathrm{rot}}^{(1)}
U_{\mathrm{ent}}
U_{\mathrm{rot}}^{(0)} .
\label{eq:xyz-ansatz}
\end{equation}
There are \(L+1=3\) rotation layers, including the initial layer, with two parameters per qubit in each layer. The total number of circuit parameters is \(N_{\mathrm{circ}}=2\times9\times(L+1)=54\). Because every qubit receives independent \(R_y\) and \(R_z\) rotations, the ansatz does not enforce magnetization, spin-flip, particle-number, or lattice symmetry. The two-dimensional entangling pattern follows the connectivity of the Hamiltonian.

\section{Notes on the neural network}
\label{app:neuralnetwork}

For studies that involve training of neural networks, setting the hyperparameters related to its training is of significant importance. However, in terms of the neural network used in this research, which consists of two fully-connected layers and a series of activation functions, including ReLU, sigmoid and tanh functions, the neural networks used here are small fully connected models, so their optimization is considerably simpler than that of large-scale deep neural architectures. Nevertheless, there are several aspects of it worth mentioning.

The DNP and U-VQNHE systems consist of two sets of parameters: variational parameters for the quantum circuits and the parameters for the classical neural network. All simulations in the research have been conducted by training the VQE first, and then training the neural network on top of the already trained VQE. The reason for that is that, possible as it may seem, joint training of the circuit and neural parameters is highly unstable in the finite-shot DNP setting considered here. When the VQE output is fixed, the neural network evaluation is deterministic, and its optimization is stable, whereas when the output of the quantum circuit is not fixed, it yields deviations on every measurement process. For joint training, this is a critical limitation, as the VQE yields unstable outputs and the neural network struggles to find a way to be trained such that it consistently provides downward gradient. Moreover, repetitive evaluation of the quantum circuit is very costly, making joint training very inefficient.

In terms of optimization tools, gradient-free COBYLA optimizer~\cite{Powell1994} has been used for optimizing the parameters of VQE and ADAM for the neural network. Nevertheless, one can also choose to use parameter-shift rule~\cite{Wierichs2022} to obtain the quantum gradients of the given ansatz circuit and use gradient-based optimization techniques, such as stochastic gradient-descent optimization. In order to apply the parameter-shift rule, all the parametrized gates must have their generators with two unique eigenvalues. As our choice of hardware-efficient ansatz consists only of RX and RZZ gates as parametrized gates, they all fall under the criterion, allowing for easy usage of the parameter-shift rule.

\bibliography{references}

\end{document}